\NeedsTeXFormat{LaTeX2e}

 \documentclass[reqno,11pt,a4paper]{amsart}

\pdfoutput=1

\usepackage{amsfonts,amsmath,amssymb}
\usepackage[latin1]{inputenc}
\usepackage{dsfont}
\usepackage{enumerate}
\usepackage{xcolor}
\usepackage[all]{xy}
\def\notshow#1\notshowend{} %
\usepackage{hyphenat}

\usepackage{graphicx}

\usepackage{tikz}
\usepackage{tikz-3dplot}
\usepackage{pgfplots}

\usepackage{multirow}
\usepackage{array}

\usepackage{amssymb}
\usepackage{amsfonts}
\usepackage{mathrsfs}
\usepackage{hyperref}

\usepackage[normalem]{ulem} 

\def\br#1\er{{#1}} %
\def\bb#1\eb{\textcolor{blue}{#1}} 
\def\bm#1\em{\textcolor{magenta}{#1}} %



  \newcommand{\N}{\mathds{N}}     
  \newcommand{\R}{\mathds{R}}     
   \newcommand{\CC}{\mathds{C}}     
 
    \newcommand{\Lo}{\mathds{L}}     
\newcommand{\LL}{{\mathbb{L}}}

\newtheorem{thm}{Theorem}[section]
\newtheorem{prop}[thm]{Proposition}
\newtheorem{lemma}[thm]{Lemma}

\theoremstyle{definition}
\newtheorem{defi}[thm]{Definition}

\newtheorem{example}[thm]{Example}
\newtheorem{exe}[thm]{Example}

\newtheorem{rem}[thm]{Remark}

\newcommand{\be}{\begin{equation}}
\newcommand{\ee}{\end{equation}}
\newcommand{\ben}{\begin{enumerate}}
\newcommand{\een}{\end{enumerate}}
\newcommand{\bit}{\begin{itemize}}
\newcommand{\eit}{\end{itemize}}
\newcommand{\edoc}{\end{document}}

\newcommand{\bq}{\begin{quote}}
\newcommand{\eq}{\end{quote}}

\newcommand{\gs}{g_{\SSS^{n-1}}}


 \newcommand{\Hi}{\mathds{H}} 
 \newcommand{\SSS}{\mathds{S}} 

\title[]{A class of cosmological models with spatially constant sign-changing curvature}

 

\author[M. S\'anchez]{Miguel S\'anchez} \address{Departamento de Geometr\'{\i}a y Topolog\'{\i}a, 
\hfill\break\indent \& IMAG 
\hfill\break\indent Universidad de Granada, \hfill\break\indent 18071 Granada, Spain}\email{sanchezm@ugr.es}

\begin{document}

\begin{abstract} 
We construct globally hyperbolic spacetimes 
 such that each slice $\{t=t_0\}$ of the universal time $t$ is a model space of constant curvature $k(t_0)$ 
 which  
  may not only vary with $t_0\in\R$ but also change its sign. 
    The metric  is smooth and  slightly different to   FLRW spacetimes, namely, $g=-dt^2+dr^2+ S_{k(t)}^2(r) g_{\SSS^{n-1}}$, where $g_{\SSS^{n-1}}$ is the metric of the standard sphere, $S_{k(t)}(r)=\sin(\sqrt{k(t)}\, r)/\sqrt{k(t)}$ when $k(t)\geq 0$ and $S_{k(t)}(r)=\sinh(\sqrt{-k(t)}\, r)/\sqrt{-k(t)}$  when $k(t)\leq 0$.
  
   In the open case, the $t$-slices are (non-compact)	 Cauchy hypersurfaces of curvature $k(t)\leq 0$, thus homeomorphic to $\R^n$; a typical example is $k(t)=-t^2$ (i.e., $S_{k(t)}(r)=\sinh(tr)/t$). 
   In the closed case,   $k(t)>0$ somewhere,  a slight extension of the class  shows how the topology of the $t$-slices changes. 
   This  makes at least one  comoving  observer to disappear in finite time $t$ showing some similarities with an inflationary expansion. Anyway, the spacetime is foliated by Cauchy hypersurfaces  homeomorphic to spheres, not all of them  $t$-slices. 
\\

\noindent 
{\em MSC:}  83C15, 53C50,   58J45. \\

\noindent {\em Keywords:} cosmological models, space topology change, constant curvature, space isotropy and homogeneity, FLRW spacetimes, Cauchy foliation, flat space instant, inflation.  \\

\end{abstract}
\maketitle


\tableofcontents

\newpage

\section{Introduction}\label{s_Intro} 

Friedman-Lema$\hat{\i}$tre-Robertson-Walker (FLRW) spacetimes constitute the standard class of cosmological spacetimes,  supported by hypotheses of isotropy 
of the ``spatial'' part of the spacetime, which imply  that each spacelike slice $\{t=t_0\}$ of the universal time $t$ will have constant curvature $k(t_0)$. These hypotheses may be somewhat tricky because, depending on how they are formulated, they will imply  whether the sign of $k(t)$ must be constant or not  (see Remark \ref{r1}). 
 The aim of the present paper is to describe a simple class of  cosmological spacetimes whose metrics are  constructed from $g=-dt^2+dr^2+ S_{k(t)}^2(r) g_{\SSS^{n-1}}$
where  such a sign change occurs for the $t$-slices, the latter being the restspaces of the (freely falling) comoving observers at $\partial_t$. 
As far as the author knows, this specific  class is not   taken into account  in classical textbooks  on Relativity (such as 
\cite{BEE, Carroll, HE, MTW, SW, Wald, W}) nor 
in standard  Cosmology. 
 This is not the unique way to obtain a spatial curvature change, at least locally  (recall that different smooth families of spaces of constant curvature with non constant sign can be constructed  and its $t$-parametrization will make the job), and an independent systematic study is being carried out in   \cite{MV}.

Our  models will be globally hyperbolic and  diffeomorphic either to $\R\times \R^n$ ({\em open models}, when $k(t)\leq 0$ everywhere) or $\R\times \SSS^n$ ({\em closed models}, when  $k(t)>0$ somewhere).  
This  suggests   some possibilities in Cosmology and, so, its interest may go  beyond the  academic one. 
 Here, we focus on a rigurous  geometric exposition of the spacetimes, in order to show how the smooth curvature sign change occurs everywhere. In particular,  full technical details   prove that the spacetime metric is  smooth at the points where  spherical coordinates are not, that is, at the origin and \br its cut locus \er at each $t$-slice, the latter case being subtler and  necessary for the  closed model.  Here, we focus on a geometric description with no care on the stress-energy tensor. However, we emphasize that the additional assumptions to ensure differentiability in the closed model (including the somewhat more general expression of the metric in Def. \ref{d_BCTCCM}) may affect the stress-energy only a small region around a singular  observer, see Remark \ref{r_further}; thus, it might be acceptable in settings such as the inflationary one.

The  open models are very simple geometrically, because each slice $\{t=t_0\}$ becomes a Cauchy hypersurface of constant non-positive curvature $k(t_0)$. 
The  closed models are more involved, as  a topological change must occur in the (constant curvature) $t$-slices; in particular, not all the $t$-slices can be Cauchy. However, a Cauchy slicing by spacelike topological spheres (not all of them with constant curvature) can be found. 
This case is more involved, as one has to impose additional hypotheses to ensure smoothability at the spacelike cut locus. We will take a somewhat more general metric depending of a function $\varphi$ and make an illustrative specific choice  (Rem.~\ref{r_further}, Def. \ref{d_BCTCCM}) 
which will 
allow us to find the Cauchy slicing in a rather explicit way. Noticeably,  a {\em singular  comoving  observer} will emerge. This shows how  spacetime inhomogeneities  blowup in order to permit  the  smooth  topological change of the  
$t$-slices,  in spite of   the  constancy of their curvatures. 

When $n=3$, our local models lie
  in  a subclass of the
 Stephani Universes \cite{St}, which  was rediscovered 
 by Krasi\'nski \cite{Kr1, Kr2}   by assuming  $O(3)$ symmetry on the spatial slices. It was also  studied systematically 
by Sussman \cite{Su0, Su}. The   possibility of  a  change of sign for certain spacelike foliations appears in some of these references (see the detailed exposition in \cite[Chapter 4]{Kr3}, especially \S~4.10 and the  historical note at p.~148), even though in a less restrictive sense than above, see our  discussion at the end of \S~\ref{s43}. 

This paper is organized as follows. 

In Section \ref{s2} we start giving a technical unified expression for the Riemannian model spaces of constant curvature $k$. Such an expression  depends on some functions $S_k(r)$ (involving sines and hyperbolic sines) which are shown to be smooth also in the parameter $k$ (Lem. \ref{l1}, Rem. \ref{r_derivadas_regla_cadena_fallida}). This technical result allows us to obtain local variations of the spatial curvature $k(t)$, including its sign  (Theorem \ref{t_local}). 

In Section \ref{s_model_open}, we show that the local model can be extended globally to provide a change between flat  and negative  curvature ($k(t)\leq 0$), giving rise to the open models  (Theorem~\ref{t_open}). A technical question for this global extension is to check that the metric expressions in spherical coordinates are smooth even when $r=0$, consistently with the $O(n)$ invariance of the model. Indeed, it is worth pointing out that a priviledged {\em centered  comoving  observer}  appears at $r=0$ (Def.~\ref{d_2ndff_slices_open}). This observer shows explicitily the existence of spacetime anisotropies between  comoving  observers, in spite of the intrinsic  isotropy of their restspaces (i.e., the $t$-slices). In fact, the anisotropies are apparent when the second fundamental form of the slices are computed  (Prop. \ref{p_2ndff_slices_open}). The  simple  case $k(t)=-t^2$ is considered explicitly (Ex. \ref{e_sinh}) and, in general, the simplicity of all these open models might make them  useful for  several  purposes.

In Section \ref{s_model_closed}  we consider the  closed models, when $k(t)>0$ somewhere. Being more involved, this case is developed in three steps.  In \S \ref{s41}  the toy model  with slices of dimension $n=1$ is considered. Of course, such slices are necessarily flat, but one can still model a topological transition from the circle $\SSS^1$ to the line $\R$, so that the globally hyperbolic spacetime matches  $(-\infty,0)\times \SSS^1$ with $[0,\infty)\times \R$.  As a working definition to give an illustrative idea,  this is called  a {\em basic cosmological topological change model} (BCTCM), see Defn.~\ref{d_BCTCM},  Prop.~\ref{p_dim2_a}. From the technical viewpoint, we introduce a function $\varphi(t,\theta)$ which will permit both, the smooth extension to the whole cylinder when $t<0$ and the smooth matching when $t\geq 0$.
Then, a noticeable object emerges  under these choices, 
 namely, the {\em singular  comoving  observer} $\gamma_\pi$ at $(-\infty,0)\times\{e^{i\pi}\}$ 
(in addition to the previous centered  comoving  observer). 
   That observer is inextendible to $t=0$, anyway, all the Cauchy hypersurfaces must intersect it. Indeed, an explicit foliation  by Cauchy hypersurfaces of the spacetime can be found using $\gamma_\pi$ (Prop.~\ref{p_sco}, Rem. \ref{r_propag0}). In \S \ref{s42}, we  consider higher spatial dimensions $n\geq 2$ and show that the basic causal properties of  our example for   $n=1$ still hold; in particular, both the centered 
   and singular \br comoving \er observers appear consistently with the $O(n)$ invariance of the model. However,  now the topological change is also a true curvature change  from $k(t)>0$ when $t<0$ to $k=0$ when $t\geq 0$ (this is called a basic cosmological  topological and  curvature change model, BCTCCM in Defn.~\ref{d_BCTCCM}) Theorem~\ref{t_closed}). Finally, in \S \ref{s43}, we show that such a change can be easily adapted to more general situations, so that, in particular, one can construct  transitions from $k(t)>0$ to $k(t)<0$.
 Some conclusions are given in the last section.


\section{Local change of the spatial curvature sign}\label{s2}

\subsection{Unified expression of the Riemannian model spaces $M_k$}

Consider the functions 
$$S_k(r):=
\left\{ \begin{array}{lll}
\frac{\sin(\sqrt{k}\, r)}{\sqrt{k}} & & \hbox{if} \; k>0 \\

r & & \hbox{if} \; k=0 \\

\frac{\sinh(\sqrt{-k}\, r)}{\sqrt{-k}} & & \hbox{if} \; k<0 
\end{array}\right. \qquad  
C_k(r):=
\left\{ \begin{array}{lll}
\cos(\sqrt{k}\, r) & & \hbox{if} \; k>0 \\

1 & & \hbox{if} \; k=0 \\

\cosh(\sqrt{-k}\, r) & & \hbox{if} \; k<0, 
\end{array}\right. 
$$
 $ r\in \R$, with derivatives  $S'_k=C_k$, $C'_k=-k S_k$ and $C_k^2+k S_k^2=1$. 
 


Let $M_k$ be the $n$-Riemannian model space of curvature $k$ ($n\geq 2)$, i.e., 
$$ M_k(r):=
\left\{ \begin{array}{lcl}
\SSS^n(r_k) & \hbox{
($n$-sphere of curvature $k=\frac{1}{r_k^2}$)}
 & \hbox{if} \; k>0 \\
\R^n & \hbox{($n$-Euclidean space)} & \hbox{if} \; k=0. \\

\Hi^n(r_k) & \hbox{
($n$-hyperbolic space of curvature $k=\frac{-1}{r_k^2}$)} & \hbox{if} \; k<0 
\end{array}
\right.
$$
The metric $g_k$  of $M_k$ can be written using normal spherical coordinates as\footnote{\br This type of expressions for functions $S_k, C_k$  with a prescribed $k$ are well-known in Riemannian comparison theory, see for example \cite[\S 3.1]{Chavel} or \cite{AG}. \er }:
\begin{equation}\label{e_spherical}
 g_k= dr^2 +S_k^2(r) \, \gs, \qquad r\in (0,d_k), \; \hbox{with} \;
d_k:= \pi/\sqrt{k}
\end{equation}
(under the convention $1/\sqrt{k}=\infty$ if $k\leq 0$), where $g_{\SSS^{n-1}}$ is the metric of the standard  unit $(n-1)$-sphere.
Recall that  $g_k$  is always smoothly extensible to $r=0$. When $k>0$, $g_k$ is also extensible to $d_k=\pi/\sqrt{k}$ on a topological sphere,  
so that
 $M_k$ can be seen extrinsically as a sphere of radius  $r_k:=1/\sqrt{k}$ in $\R^{n+1}$; intrinsically, however, the  diameter of this sphere (i.e. the supremum of the distance between each two points) is $d_k= \pi/\sqrt{k}$.
When $k\leq 0$,  the expression \eqref{e_spherical}  is  defined for  $r\in (0,\infty)$ (the intrinsic diameter $d_k$ is infinity) and $g_k$ becomes a metric on the whole $\R^n$.

\subsection{Smoothness of the variation with $k$ and local transition model}
For $k=\epsilon=1,0,-1$, one has the elementary Maclaurin series: 
\be \label{e_Sgeneral0}
S_\epsilon(r)= \sum_{m=0}^\infty (-\epsilon)^m \frac{r^{2m+1}}{(2m+1)!}=r - \epsilon \frac{r^3}{3!}+  \epsilon^2 \frac{r^5}{5!}- \epsilon \frac{r^7}{7!}+...
\ee
where $(\epsilon=0)^0:=1$. This can be used to prove the smoothness of $S_k(r)$ with $k$ and, then, to construct a local transition model of curvature.

\begin{lemma}\label{l1} The function
$
\R^2\ni (k,r) \mapsto S_k(r)
$
is analytic. Thus, for any smooth function
  $\R\ni t \mapsto k(t)$, it is also smooth 
\be \label{e1}
\R^2\ni (t,r) \mapsto S(t,r):= S_{k(t)}(r).
\ee
\end{lemma}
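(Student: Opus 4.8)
The plan is to produce a single Maclaurin series in the two variables $(k,r)$ that represents $S_k(r)$ simultaneously on all three branches $k>0$, $k=0$, $k<0$, and then to invoke the elementary fact that an everywhere-convergent double power series defines a (real-)analytic function. The conceptual point is that the apparently non-smooth factor $\sqrt{k}$ occurring for $k\neq 0$ never survives on its own: since $\sin$ and $\sinh$ are odd, every power of $\sqrt{k}$ appearing in their Taylor expansions is odd, so after dividing by the outer $\sqrt{k}$ only even powers of $\sqrt{k}$ — that is, integer powers of $k$ — remain.

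Concretely, first I would rewrite the two non-trivial branches. For $k>0$, using $(\sqrt{k})^{2m+1}/\sqrt{k}=k^{m}$,
\be
\frac{\sin(\sqrt{k}\,r)}{\sqrt{k}}=\sum_{m=0}^{\infty}(-1)^{m}\frac{(\sqrt{k}\,r)^{2m+1}}{(2m+1)!\,\sqrt{k}}=\sum_{m=0}^{\infty}\frac{(-k)^{m}}{(2m+1)!}\,r^{2m+1},
\ee
and for $k<0$, using $(\sqrt{-k})^{2m+1}/\sqrt{-k}=(-k)^{m}$, the series for $\sinh(\sqrt{-k}\,r)/\sqrt{-k}$ yields exactly the same right-hand side; for $k=0$ only the $m=0$ term survives and returns $r$. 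Hence, on all of $\R^{2}$,
\be \label{e_unified}
S_k(r)=\sum_{m=0}^{\infty}\frac{(-k)^{m}}{(2m+1)!}\,r^{2m+1},
\ee
which specializes to \eqref{e_Sgeneral0} at $k=\epsilon\in\{1,0,-1\}$. On any compact set $\{|k|\le K,\ |r|\le R\}$ the modulus of the $m$-th term is bounded by $K^{m}R^{2m+1}/(2m+1)!$, and $\sum_{m} K^{m}R^{2m+1}/(2m+1)!<\infty$; by the Weierstrass $M$-test the series \eqref{e_unified} converges absolutely and uniformly on compacta. The same estimate holds verbatim for complex $(k,r)$, so \eqref{e_unified} is in fact entire on $\CC^{2}$, and its restriction to $\R^{2}$ is the claimed analytic function, proving the first assertion.

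The second assertion then follows by composition: analyticity yields $S_k(r)\in C^{\infty}(\R^{2})$ in $(k,r)$, the map $(t,r)\mapsto(k(t),r)$ is smooth whenever $t\mapsto k(t)$ is, and $S(t,r)=S_{k(t)}(r)$ is the composition of the two, hence smooth. The only genuinely delicate point of the whole argument is the first one — recognizing that the oddness of $\sin$ and $\sinh$ forces the square roots to recombine into integer powers of $k$, so that crossing $k=0$ is not merely $C^{\infty}$ but analytic; once the unified series \eqref{e_unified} is in hand, everything else is routine.
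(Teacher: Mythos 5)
Your proof is correct and follows essentially the same route as the paper's: both reduce the three branches to the single unified power series $\sum_{m\ge 0}(-k)^m r^{2m+1}/(2m+1)!$ by observing that the square roots recombine into integer powers of $k$, and then conclude analyticity from convergence of this double series (the paper citing \cite[Lemmas A1, A2]{Ross}, you via an explicit $M$-test on compacta). Your convergence estimate is, if anything, slightly more explicit than the paper's, but the key idea and the structure of the argument are the same.
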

\begin{proof}
Let $\epsilon(k)$ be the sign of $k$, using \eqref{e_Sgeneral0} (with  $-\epsilon(k) |k| =-k$):
\be \label{e_Sgeneral}
\begin{array}{rl}
S_k(r)= & \br \frac{1}{\sqrt{|k|}} S_{\epsilon(k)} (\sqrt{|k|} \; r) = \er \sum_{m=0}^\infty (-\epsilon(k))^m \frac{(\sqrt{|k|}\, r)^{2m+1}}{\sqrt{|k|} (2m+1)!} 
\\
= &  \sum_{m=0}^\infty (-\epsilon(k))^m |k|^m \frac{r^{2m+1}}{(2m+1)!}
=
\sum_{m=0}^\infty (-k)^m \frac{r^{2m+1}}{(2m+1)!},
\end{array}
\ee
which is \br norm convergent for  $|k|<1$ and, then, analytic everywhere (see for example \cite[Lemmas A1, A2]{Ross}). \er
\end{proof}

\begin{rem}\label{r_derivadas_regla_cadena_fallida}
(1) Notice that the functions above are smooth even if $|k|$ or $\sqrt{|k|}$ (eventually regarded as functions of $t$) are not smooth at 0. Indeed, the derivatives can be obtained  by derivating directly the terms in the series \eqref{e_Sgeneral}. 
In particular, for $k$ smooth in $t$ with derivative $k'$,
\begin{equation} \label{e_ejemplo_kprimat0}
\partial_t S(t,r)= k'(t) \,  \partial_k(S_k(r))_{k=k(t)},
\quad 
\partial_k (S_k(r))=
-\sum_{m=1}^\infty m (-k)^{m-1} \frac{r^{2m+1}}{(2m+1)!}.
\end{equation}
Thus, $\partial_t S(t,r)=0$ whenever $k'(t)=0$ and  $\partial_k (S_k(r))|_{k=0}=-r^{3}/6$.

(2) The previous observation should be taken into account even when $k\leq 0$, as in the open models below. Indeed, taking into account \eqref{e_Sgeneral},  define 
\begin{equation}\label{e_f}  
f(z)= \sum_{m=0}^\infty  \frac{z^{2m}}{(2m+1)!}, \forall z\in \R,  \quad \hbox{so that} \quad
S_{k}(r)=rf\left(\sqrt{-k}\; r\right), 
\end{equation}
for all $k\leq 0$ and $r\in\R$. Even though  $f$ is smooth  the chain rule should not be used in \eqref{e_f}. In fact,  take a function  $\R\ni t\mapsto k(t)\leq 0$ and put
$$S(t,r) (=S_{k(t)}(r))= rf\left(\sqrt{-k(t)}\; r\right).$$
Then, $t \mapsto -k(t)=|k(t)|$ may be smooth 
but $\sqrt{-k(t)}$ may be non-smooth (say, $k(t)=-t^2$, $\sqrt{-k(t)}=|t|$). 
 
 (3) A shortcut to avoid such subtleties later would be to choose   $k(t)$ such that, whenever $k(t_0)=0$, then all the   derivatives of $k$ vanish at $t_0$. However, this would not be a big simplification and would exclude simple  choices as above (used in  Ex. \ref{e_sinh} below).
\end{rem}

Now, let us construct a spacetime in an open set of $\R^{n+1}$ with a change of the sign of the spatial curvature
 for any choice of $k(t)$ . 

\begin{thm} \label{t_local}
Let $k: \R \rightarrow \R$ be any smooth function and consider the open subset $U \subset \R\times\R^n$ defined taking spherical coordinates in $\R^n$ as
$$
U=\left\{(t,x): 0<r(x)<d_{k(t)}:=\pi/\sqrt{k(t)}\right\} \qquad (d_{k(t)}:=\infty \; \hbox{if} \; k(t)\leq 0)$$
endowed with  
$g=-dt^2+dr^2+ S_{k(t)}^2(r) g_{\SSS^{n-1}}$. 

Then  $g$  is a smooth Lorentzian metric on $U$ and each  slice $\{t=t_0\}$ has constant curvature $k(t_0)$.
\end{thm}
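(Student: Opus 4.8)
The plan is to exploit the fact that the open set $U$ has been defined precisely so as to avoid the two places where normal spherical coordinates degenerate---the origin $r=0$ and the cut locus $r=d_{k(t)}$---so that on $U$ the functions $(t,r,\vartheta)$, with $\vartheta$ ranging over coordinates of $\SSS^{n-1}$, form a genuine smooth coordinate system. The whole statement then reduces to three routine verifications: smoothness of the single nonconstant coefficient $S_{k(t)}^2(r)$, nondegeneracy with Lorentzian signature, and identification of the induced metric on each slice with a model-space metric.

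First I would establish smoothness of $g$. The only nonconstant coefficient is $S_{k(t)}^2(r)$, and by Lemma \ref{l1} the map $(t,r)\mapsto S_{k(t)}(r)$ is smooth for any smooth $k$; squaring preserves smoothness, while the coefficients of $\gs$ are smooth functions of $\vartheta$ alone. Hence every component of $g$ in these coordinates is smooth on $U$.

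Next I would check the signature. The metric is block diagonal, $g=-dt^2\oplus g_{k(t)}$ with $g_{k(t)}=dr^2+S_{k(t)}^2(r)\,\gs$, so it suffices that the spatial block be positive definite. Because $\gs$ is the Riemannian metric of the unit sphere and, on $U$, one has $S_{k(t)}(r)>0$ (indeed $S_k(r)=\sin(\sqrt{k}\,r)/\sqrt{k}>0$ for $0<r<\pi/\sqrt{k}$ when $k>0$, and $S_k(r)>0$ for all $r>0$ when $k\leq 0$), the coefficient $S_{k(t)}^2(r)$ never vanishes on $U$ and the spatial block is positive definite. Together with the $-dt^2$ term this gives signature $(-,+,\dots,+)$, so $g$ is Lorentzian.

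Finally, on a slice $\{t=t_0\}$ one has $dt=0$, so the induced metric is exactly $dr^2+S_{k(t_0)}^2(r)\,\gs$, which is the expression \eqref{e_spherical} for the model-space metric $g_{k(t_0)}$ on the chart $0<r<d_{k(t_0)}$; thus each slice is an open piece of $M_{k(t_0)}$ and has constant curvature $k(t_0)$. The obstacle one might anticipate---smoothness at $r=0$ and at the cut locus---does not arise here, since those points lie outside $U$ by construction; the only genuine input is the joint smoothness in $(t,r)$ supplied by Lemma \ref{l1}, the remainder being the block-diagonal signature count and the slice identification. Those delicate extension issues are precisely what the later global theorems will have to confront.
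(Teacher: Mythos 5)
Your proof is correct and follows essentially the same route as the paper: smoothness of the single nonconstant coefficient via Lemma \ref{l1}, the Lorentzian signature from $S_{k(t)}^2(r)>0$ on $U$, and the slice curvature read off from \eqref{e_spherical}. The only detail the paper adds that you take for granted is the verification that $U$ is actually open, which it deduces from the continuity of $k\mapsto d_k$ as a function into $(0,\infty]$.
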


\begin{proof}
Given the function $k(t)$,  $U$ is chosen so that the spherical coordinates are smoothly well defined in each slice $\{t=t_0\}$ and the continuity of $d_k$ with $k$ as a function on $(0,\infty]$ implies that $U$ is open.  Thus, the smoothness of the tensor $g$ is just a consequence of the smoothness of $S(t,r)=S_{k(t)}(r)$ in \eqref{e1}, ensured by Lemma \ref{l1}. As $S_{k(t)}^2(r)>0$ on $U$, $g$ is Lorentzian, and the curvature $k(t_0)$ of each slice follows from \eqref{e_spherical}.  
\end{proof}

\begin{rem}\label{r1} (1) Being each slice $\{t=t_0\}$ of constant curvature, it is   locally isotropic, that is, each $p$ in the slice admits some neigborhood $W\subset \{t=t_0\}$ such that for  each two tangent directions $v,w$ of the slice at $p$ there exists a isometry of $W$ which maps $v$ in $w$.\footnote{The slice is also locally homogeneous (i.e. any two points $p, q$ in $\{t=t_0\}$, admit  neighborhoods $W_p, W_q, \subset \{t=t_0\}$  and a local isometry of the slice which maps $p$ into $q$), which is also a usual assumption for convenient spatial slices.} 
 However,    
a local isometry of the {\em  spacetime} mapping the direction $v$ into $w$ and preserving the slices will not exist  in general. Indeed,  such a property would forbid the change of sign for $k(t)$, see \cite[p. 342, Prop. 6]{O} (compare with \cite[\S 5.1]{W})\footnote{Compare  also with  the intrinsic characterization of Generalized Robertson-Walker spacetimes  in \cite{Sa98} 
(the metric of these spacetimes  was introduced more directly  in \cite{ARS}).}. 
This stronger condition has not been always taken into account in the standard literature  (see for example \cite[p. 112-113]{C-B}) and, thus, our metrics in Theorem \ref{t_local} might have been  considered  
then; a detailed study is  carried out in~\cite{Av}.  

(2) Our metrics above lie in a particular case of  spherical symmetry; indeed,  $S(t,r)$ corresponds with the standard function $Y(t,r)$  in the book by Stephani et al. \cite{St_et_al}, formula (15.9). In the regions where the gradient of $Y$ is timelike or spacelike, sometimes it is used  as a canonical ``$t$'' 
or ``$r$'' coordinate; however,    the gradient of our $S(t,r)$ may be lightlike (recall~\eqref{e_ejemplo_kprimat0}). We emphasize that, when working with such a  $Y(t,r)$, the study is not global (even if $Y(t,r)$ is smooth everywhere). In fact, one has to check whether $g$ matches smoothly with the spherical coordinates at $r=0$ or $r=d_{k(t)}$, which will be a posteriori the boundary of $U$ in the whole spacetime. Typically, the case $r=0$ comes from a direct computation (as in our open model below); however, the case  $r=d_{k(t)}<\infty$ must be carefully taken into account, as we will do in the closed case. 

(3)  The metric $g$ can be regarded as a warped product with base \br $(U_B,-dt^2+dr^2)$, where $U_B=\{(t,r)\in \R^2: 0<r<d_{k(t)}\}$, \er fiber the sphere $\SSS^{n-1}$
and warping function $S(t,r)=S_{k(t)}(r)$; so, O'Neill's formulas  for geodesics and curvature  \cite[Ch. 7]{O} apply. In particular, the leaves $U\times \{u_0\}$, $u_0\in \SSS^{n-1}$, are totally geodesic and the integral curves of $\partial_t$ are geodesics in the whole spacetime, that is, the comoving observers are freely falling. \br The Ricci curvature and, then, the Einstein tensor with arbitrary cosmological constant, 
can also be  computed 
by using the warped structure (see the Appendix). However, their properties depend strongly on the choice  of $k(t)$ and  physical applicability would be analized elsewhere. \er
\end{rem}

\section{Open cosmological spacetimes}\label{s_model_open} 
Next, let us  go from the local to the global model for curvature sign change  starting at the open model, $d_{k(t)}=\infty$ for all $k(t)$. This corresponds to the continuous extension of the metric $g$ in Theorem \ref{t_local} from $U=\R^{n+1}\setminus \{r=0\}$ (i.e., $\R^{n+1}\setminus (\R\times \{0\})$) 
to the whole $\R^{n+1}$. We will check that this extension is also smooth as well as  other announced properties.

\begin{thm}\label{t_open} For any smooth nonpositive function  $\R \ni t \mapsto k(t)\leq 0$,  
\be \label{e_metrica_gen}
g=-dt^2+dr^2+ S_{k(t)}^2(r) g_{\SSS^{n-1}} 
\ee
is a smooth Lorentzian metric on the whole $\R^{n+1}=\R\times \R^n$ with slices $\{t=t_0\}$  isometric to the model space of curvature $k(t_0)\leq 0$ (an Euclidean or hyperbolic space).
Moreover, each slice $\{t=t_0\}$ is a Cauchy hypersurface.

\end{thm}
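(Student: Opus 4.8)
My plan is to treat the three assertions separately: smoothness of $g$ across the axis $\{r=0\}=\R\times\{0\}$, the identification of each slice with a model space, and the Cauchy property, which is the real content. Away from the axis $d_{k(t)}=\infty$ for $k(t)\le0$, so Theorem~\ref{t_local} already delivers a smooth Lorentzian metric on $\R^{n+1}\setminus(\R\times\{0\})$; the only thing to add is smoothness at $r=0$, where spherical coordinates degenerate. To handle this I would pass to Cartesian coordinates $x\in\R^n$ and set $A:=|x|^2=r^2$. Using $\sum_i dx_i^2=dr^2+r^2g_{\SSS^{n-1}}$ and $dr=r^{-1}\sum_i x_i\,dx_i$, the spatial part becomes $\sum_{i,j}(\Phi\,\delta_{ij}+A^{-1}(1-\Phi)x_ix_j)\,dx_i\,dx_j$ with $\Phi:=S_{k(t)}^2(r)/r^2$. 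The point is that by \eqref{e_Sgeneral} the ratio $S_k(r)/r=\sum_{m\ge0}(-k)^mA^m/(2m+1)!$ is analytic \emph{in the pair} $(k,A)$ --- writing $f$ of \eqref{e_f} as a series in $A$ rather than in $\sqrt{-k}\,r$ is exactly what sidesteps the non-smoothness of $\sqrt{-k(t)}$ flagged in Remark~\ref{r_derivadas_regla_cadena_fallida}(2). Hence $\Phi$ is analytic in $(k,A)$, and since $\Phi-1$ vanishes at $A=0$ so is $(1-\Phi)/A$; composing with the smooth maps $t\mapsto k(t)$ and $x\mapsto(|x|^2,x_ix_j)$ shows all coefficients are smooth on $\R^{n+1}$.

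At $r=0$ we have $\Phi=1$ and the correction term drops, so the spatial block is $\sum_i dx_i^2$ and $g$ has Lorentzian signature there; off the axis $S_{k(t)}(r)>0$ keeps it positive definite, so $g$ is Lorentzian everywhere. For the second claim, the metric induced on a fixed slice $\{t=t_0\}$ is precisely $h_{t_0}:=dr^2+S_{k(t_0)}^2(r)g_{\SSS^{n-1}}$, which by \eqref{e_spherical} (with $d_{k(t_0)}=\infty$) is the complete metric of the model space $M_{k(t_0)}$ --- Euclidean when $k(t_0)=0$ and hyperbolic $\Hi^n$ when $k(t_0)<0$.

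For the Cauchy property I would write $g=-dt^2+h_t$, so that $\partial_t$ is a unit timelike field normal to the slices, $\nabla t=-\partial_t$ is timelike, and each slice is spacelike. Thus $t$ is a time function: along any future-directed causal $\gamma$ one has $dt(\gamma')>0$, so $t$ is strictly increasing and every slice is acausal and is met at most once. It then remains to show $t$ is surjective along every inextendible causal curve. I would reparametrize such a curve by $t$ on its maximal $t$-interval $(t_-,t_+)$, writing it $t\mapsto x(t)=(r(t),\omega(t))$; the causal condition reads $h_t(\dot x,\dot x)\le1$. The decisive structural fact is that $h_t\ge dr^2$ \emph{uniformly in $t$} --- the angular term is nonnegative and the radial term does not depend on $t$ --- which forces $|\dot r|\le1$, i.e. $r(t)$ is $1$-Lipschitz.

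The hard part is excluding an inextendible causal curve with bounded $t$, and this is exactly where the $t$-independence of $dr^2$ pays off. Suppose $t_+<\infty$. The Lipschitz bound gives $r(t)\to r_*\in[0,\infty)$; if $r_*=0$ then $x(t)\to0$ in $\R^n$, while if $r_*>0$ then $S_{k(t)}(r)$ is bounded below by a positive constant on a compact neighbourhood of the endpoint, so $h_t(\dot x,\dot x)\le1$ also bounds the angular speed and $\omega(t)$ converges in the compact $\SSS^{n-1}$. In either case $\gamma(t)\to(t_+,x_*)$, a future endpoint, contradicting inextendibility; hence $t_+=+\infty$ and symmetrically $t_-=-\infty$. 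Together with strict monotonicity this shows every inextendible causal curve meets each $\{t=t_0\}$ exactly once, so the slices are Cauchy (and the spacetime is globally hyperbolic).
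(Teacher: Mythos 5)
Your proof is correct. For the smoothness at the axis it is essentially the paper's own argument: the paper also passes to Cartesian coordinates, splits the spatial part as $\frac{S^2(t,r)}{r^2}\sum_i(dx^i)^2+\bigl(1-\frac{S^2(t,r)}{r^2}\bigr)dr^2$, and uses the analyticity of $S_k(r)/r$ and of $\frac{1}{r^2}\bigl(1-\frac{S_k(r)}{r}\bigr)$ as power series in $\bigl(k,\sum_i(x^i)^2\bigr)$ from \eqref{e_Sgeneral}; your factorization of the correction term through $(1-\Phi)/A$ and $x_ix_j$ is the same computation in slightly different packaging. Where you diverge is the Cauchy property: the paper observes that $S_{k(t)}(r)\geq r$ (all terms of \eqref{e_Sgeneral} are nonnegative for $k\leq 0$), so the light cones of $g$ are narrower than those of $\Lo^{n+1}$, and then cites a general cone-comparison result (\cite[Prop.\ 3.1]{S_Penrose}) to conclude. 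You instead use only the weaker uniform bound $h_t\geq dr^2$, obtain the $1$-Lipschitz control on $r(t)$, and close the argument by hand via compactness of $\SSS^{n-1}$ and a limit-endpoint contradiction. Your route is self-contained and slightly more elementary (it does not need the full Euclidean comparison $h_t\geq dr^2+r^2\gs$ nor the external proposition), at the cost of the standard but unglamorous bookkeeping about reparametrizing causal curves by $t$ and extendibility at a limit point; the paper's route is shorter and reuses a result that is also invoked later for the closed models.
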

\begin{proof}
   Theorem \ref{t_local} reduces the first assertion to  prove  that the spherical expression becomes  smooth at $r=0$. With this aim, we will rewrite the metric using cartesian coordinates $x^i$  in $\R^n$. Indeed, using the function $S(t,r)$ in \eqref{e1}, it is enough to prove the smoothness at $r=0$ of $g+dt^2$ in \eqref{e_metrica_gen}, that is, 
\begin{equation}
\label{e_cartesian}
\begin{array}{rl}
 dr^2 +\frac{S^2(t,r)}{r^2}r^2 \gs
  & = dr^2 +\frac{S^2(t,r)}{r^2}(\sum_{i=1}^n (dx^i)^2-dr^2) \\
  &
= \frac{S^2(t,r)}{r^2} \sum_{i=1}^n (dx^i)^2+
 \left(1- \frac{S^2(t,r)}{r^2}\right) dr^2.
\end{array}
\end{equation}
Let us analyze the last two terms. For the first one, $S(t,r)/r$ is the composition of the smooth function 
$(t,r)\mapsto (k(t),r)$ and the function $S_k(r)/r$. Using  
\eqref{e_Sgeneral} for the latter
\begin{equation}\label{e_cartesian1}
\frac{S_k(r)}{r}=
\sum_{m=0}^\infty (-k)^m \frac{r^{2m}}{(2m+1)!}=
\sum_{m=0}^\infty (-k)^m \frac{(\sum_{i=1}^n(x^i)^2)^{m}}{(2m+1)!},
\end{equation}
which is analytic for all $(k, x^i)\in \R\times \R^n$.
For the last term in \eqref{e_cartesian}, as $r^2 dr^2$ is always smooth, the problem is reduced to  the smoothness of
$$\frac{1}{r^2}\left(1- \frac{S^2(t,r)}{r^2}\right)=
\frac{1}{r^2}\left(1- \frac{S^2_{k(t)}(r)}{r^2}\right)=
\frac{1}{r^2}\left(1- \frac{S_{k(t)}(r)}{r}\right)\left(1+ \frac{S_{k(t)}(r)}{r}\right).
$$
Reasoning as above, the result follows from the analyticity of
\begin{equation}\label{e_cartesian2}
\begin{array}{rl}
\frac{1}{r^2}\left(1- \frac{S_{k}(r)}{r}\right)= &
- \frac{1}{r^2}\sum_{m=1}^\infty (-k)^m \frac{(\sum_{i=1}^n(x^i)^2)^{m}}{(2m+1)!} \\
= & 
- \sum_{m=0}^\infty (-k)^{m+1} \frac{(\sum_{i=1}^n(x^i)^2)^{m}}{(2m+3)!}.
\end{array}
\end{equation}

The Cauchy character of the slices follows because 
$S_{k(t)}(r)\geq r$ (use \eqref{e_Sgeneral} noticing $-k\geq 0$) and, thus, the cones of $g$ are narrower than those of $\Lo^{n+1}$ (see \cite[Prop. 3.1]{S_Penrose} for background and a more general result).
\end{proof}

Each slice $\{t=t_0\}$ is isometric to a model space and, thus, it is globally   isotropic and  homogeneous (recall Rem. \ref{r1}). However, the way how the slice is embedded in the spacetime does not satisfy these properties, as checked next.

\begin{prop} \label{p_2ndff_slices_open} The second fundamental form $II_{t_0}$ (with respect to the future direction given by $\partial_t$) of the slice $\{t=t_0\}$ satisfies:
\begin{equation}\label{e_2ndff_slices_open}
II_{t_0}=  S(t_0,r)\partial_tS(t_0,r)\gs
\end{equation}
In particular: (a) if the slice  $\{t=t_0\}$ of the open model is flat then it is totally geodesic, and (b)  $II_{t_0}$ vanishes on the point $r=0$.
\end{prop}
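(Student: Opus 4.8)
The plan is to compute $II_{t_0}$ directly from the product-type form of the metric, exploiting that $\partial_t$ is a unit normal field whose integral curves are the geodesic comoving observers (Rem.~\ref{r1}(3)). Writing $g=-dt^2+h(t)$ with $h(t)=dr^2+S^2(t,r)\gs$ the induced metric on $\{t=t_0\}$, I first note that $\partial_t$ is future-pointing with $g(\partial_t,\partial_t)=-1$, so it is the future unit normal to every slice. Adopting the convention $II_{t_0}(X,Y)=g(\nabla_X\partial_t,Y)$ for $X,Y$ tangent to the slice, a short Christoffel computation using $g_{tt}=-1$, $g_{ti}=0$ and $g_{ij}=h_{ij}(t)$ (so that the only $t$-dependence sits in the spatial block) yields $\Gamma^k_{it}=\tfrac12 h^{k\ell}\partial_t h_{i\ell}$ and hence $II_{t_0}(\partial_i,\partial_j)=\tfrac12\,\partial_t h_{ij}$. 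Equivalently, $II_{t_0}=\tfrac12\,\partial_t h\big|_{t=t_0}=\tfrac12\,\mathcal{L}_{\partial_t}g$ restricted to the slice, which is the expected rate of change of the spatial metric along the normal congruence and also follows from O'Neill's warped-product formulas referenced in Rem.~\ref{r1}(3).

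Next I would extract \eqref{e_2ndff_slices_open}. Since $h_{rr}=1$ is $t$-independent and there are no cross terms, the only surviving contribution comes from the angular block $S^2(t,r)\gs$, giving
$$
II_{t_0}=\tfrac12\,\partial_t\!\left(S^2(t,r)\right)\gs=S(t_0,r)\,\partial_t S(t_0,r)\,\gs,
$$
which is exactly the claimed formula. For part (a), if the slice $\{t=t_0\}$ is flat then $k(t_0)=0$; because the open model imposes $k(t)\leq 0$ everywhere, $t_0$ is a global maximum of $k$, so $k'(t_0)=0$. Invoking Rem.~\ref{r_derivadas_regla_cadena_fallida}(1), $\partial_t S(t_0,r)=k'(t_0)\,\partial_k(S_k(r))\big|_{k=k(t_0)}=0$ for all $r$, whence $II_{t_0}\equiv 0$ and the slice is totally geodesic.

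For part (b), the prefactor $S(t_0,0)=0$ (and in fact $\partial_t S(t_0,0)=0$ as well, its series \eqref{e_ejemplo_kprimat0} starting at order $r^3$), so the expression vanishes at $r=0$; the only care needed is the degeneracy of $\gs$ there, which I would handle as in the proof of Theorem~\ref{t_open} by rewriting $II_{t_0}=\bigl(\partial_t S/S\bigr)\,S^2\gs$ in Cartesian coordinates, where $S^2\gs$ extends smoothly and the scalar factor tends to $0$ as $r\to 0$.

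The computation itself is routine once $\partial_t$ is identified as the unit normal, so the genuinely delicate points are the two structural observations: in part (a), that it is the sign constraint $k\leq 0$ of the open model (not merely $k(t_0)=0$) which forces $k'(t_0)=0$ and hence upgrades flatness to total geodesy; and in part (b), that the vanishing at $r=0$ must be justified against the spherical-coordinate singularity via the Cartesian reformulation.
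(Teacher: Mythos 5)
Your proposal is correct and follows essentially the same route as the paper: it identifies $\partial_t$ as the future unit normal, computes $II_{t_0}=\tfrac12\,\partial_t g$ restricted to the slice to obtain \eqref{e_2ndff_slices_open}, and derives (a) from the fact that $k(t_0)=0$ with $k\leq 0$ forces $k'(t_0)=0$ so that $\partial_t S(t_0,\cdot)\equiv 0$ by Remark~\ref{r_derivadas_regla_cadena_fallida}. The only addition is your explicit care about the degeneracy of $\gs$ at $r=0$ in part (b), which is a harmless refinement of the paper's argument.
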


\begin{proof} As $\partial_t$ is unit and normal to the slices, the expression follows from $II_{t_0}= \frac{1}{2} \partial_tg$
The assertions (a) and (b) follow from the expression of these derivatives in Rem. \ref{r_derivadas_regla_cadena_fallida} (for (a), recall that necessarily $k'(t_0)=0$).
\end{proof}
Notice that the integral curves of $\partial_t$ can be regarded as  comoving  observers, and $r=0$ determines one of them, namely, $\R\ni t\mapsto (t,r=0)$). Taking into account the assertion (b),  this observer is priviledged for the intrinsic geometry of the spacetime, up to trivial case $k\equiv 0$ (i.e., $\LL^{n+1}$). 
The following definition allows us to summarize the introduced notions and results.
\begin{defi}\label{d_2ndff_slices_open}  The spacetime
$(\R \times \R^n, g)$ defined in Theorem \ref{t_open} is the {\em open cosmological model} of  spatial curvature 
function\footnote{With more generality, one can consider (here and later, in the case of closed models) the possibility $I\times \R^n$, 
with $I\subset \R$ an open interval. The simplification $I=\R$ is essentially notational and enough for the properties to be considered here. 
}  $t\mapsto k(t)$.
 The integral curve of $\partial_t$
at $r=0$ will be called the {\em  centered  comoving  observer}.
\end{defi}

Even though the geometry of the spacetime priviledges  the centered  comoving  observer  (except in $\LL^{n+1}$), we  emphasize  that it cannot be determined by using only the intrinsic geometry of each slice (indeed, we have used the extrinsic one). 

\begin{exe} \label{e_sinh} The simple  choice   $k(t)=-t^2$,   that is\footnote{More precisely, $S_{k(t)}(r)=\frac{\sinh (|t| r)}{|t|}=\frac{\sinh (t r)}{t}$.}, 
\be \label{e_metrica_part}
g=-dt^2+dr^2+  \frac{\sinh^2(t r)}{t^2}  g_{\SSS^{n-1}},
\ee
has slices $\{t=t_0\}$ with intrinsic curvature $-t_0^2$. The slice $\{t=0\}$ is flat and totally geodesic. The comoving  observers at $\partial_t$ will find  a bouncing,  that is, a
contraction (they approach each other) before this slice and an expansion after it\footnote{\label{f_sinh} With obvious modifications,  these properties are shared by  the flat slices $\{t=t_0\}$ 
of  all the open cosmological models, whenever $k$ is not constant in $[t_0-\delta, t_0]$ or $[t_0, t_0+\delta]$ for some $\delta>0$.}. Using \eqref{e_2ndff_slices_open}, 
the other slices have the  second fundamental form: 

$$
II_{t_0}= 
 \;  \frac{\sinh(t_0 r)}{t_0^2} \;
\left( r\cosh(t_0 \, r)-\frac{\sinh(t_0 \, r)}{t_0}\right)
  \gs .
$$
Clearly, this expression is anisotropic, as $II_{t_0}$ vanishes in the radial directions
(in particular,  along the centered comoving  observer $r=0$) but grows  with $r$ in the directions orthogonal to the radial ones.
\end{exe}

\section{Closed cosmological spacetimes}\label{s_model_closed}
The case when $k>0$ somewhere becomes subtler, as $M_k$ is a sphere with intrinsic diameter  $d_k= \pi/\sqrt{k}$. From the local viewpoint, we handled this case with no special caution. However, from the global one, if $k$ changes from positive to 0 (and eventually negative) then a topological change will occur. Thus, in order to highlight the relevant global ideas, we will start by focusing on the case of a smooth function $k$ satisfying:
\begin{equation}\label{e_curvat_top_change}
k(t) \left\{
\begin{array}{ll}
> 0 & \hbox{if} \; t<0 \\
= 0 & \hbox{if} \; t\geq 0.
\end{array}
\right.
\end{equation}

The strictly topological subtleties of the change appear clearly in the toy model  $n=1$ (so that the curvature $k(t)$ of all the slices is necessarily 0), which will be studied first in \S \ref{s41}. Then we will check that this extends to a true curvature change from positive to 0 curvature when  $n\geq 2$ in \S \ref{s42} and, finally, arbitrary curvature changes will be achieved in \S \ref{s43}.

\subsection{Spatial dimension $n=1$}\label{s41} In this case, the curvature of the slices is not taken into account but the extrinsic radius $R$ will play the role  of 
$1/\sqrt{|k|}$ in  higher dimensions. In order to achieve the topological change, we will have to make an involved construction by distinguishing a pair of comoving  observers. 

We will consider $\theta$ as the natural coordinate of $]-\pi,\pi[$; this interval, eventually, will be identified with a circle but a point, namely $\SSS^1\backslash\{e^{i\pi}\} (\subset \CC)$. Next, consider the following metric 
\begin{equation}
\label{e_closed metric}
g_B=-dt^2+\varphi^2(t,\theta) d\theta^2 \qquad (t,\theta)\in \R\times ]-\pi,\pi[
\end{equation}
where $\varphi$ will be a suitable  function such that $g_B$ is extensible to $(t,e^{i\pi})$ for $t<0$  but not for $t=0$.  
Specifically, $\varphi$ is any smooth function satisfying:
\ben
\item[(a)] \label{1} $
\varphi(t,\theta)\geq 1$ and non-decreasing with $t\in\R$, for each $\theta\in ]-\pi,\pi[$. 
\item[(b)] \label{2}
  for each $m\in \N$ (positive integer)
\begin{equation}\label{e_varphi}
\left\{
\begin{array}{llll}
\varphi(t,\theta)\geq  m e^m  ,  & \hbox{if} & \; t\geq -\frac{1}{m}, & \theta\not\in [-\pi+\frac{1}{m}, \pi-\frac{1}{m}] \\
\hbox{$t\mapsto \varphi(t,\theta)$ is constant}, & \hbox{if} & \; t\geq -\frac{1}{2m}, &  \theta\in [-\pi+\frac{1}{m}, \pi-\frac{1}{m}]
\end{array}
\right.
\end{equation}
\item[(c)] \label{3} $\varphi(t,\theta)=\varphi(t,-\theta), \forall \theta\in ]-\pi,\pi[$  and $\varphi$ can be smoothly extended to $(t,e^{i\pi})$ when $t<0$. Moreover\footnote{\label{foot_extremoPi}  These two additional  conditions here will be used only in the case $n\geq 2$ later. Notice, however, that 
both of them will be satisfied by the explicit $\varphi$ in Example \ref{ex1} (in particular, the auxiliary $\hat \varphi(t,\cdot)$ therein will be equal to 1 in $(-\pi+1,\pi-1)$). The second condition implies the previously required smooth extendability of $\varphi$ to $(t,e^{i\pi}), t<0$ (in fact, $\varphi(t,\theta )$ depends only on $t$ close to $(t,e^{i\pi}), t<0$).  Of course, this independence of $\theta$ can be weakened by assuming other more accurate conditions, such as the vanishing of enough partial derivatives for $\theta$ on  $(t,e^{i\pi})$.},
 $\varphi(t,\cdot)$ is  locally equal to 1 around $\theta=0$ and, for each $t<0$, $\varphi(t,\cdot)$ is locally constant in a small neighborhood of $e^{i\pi}$ of radius $\epsilon(t) \searrow 0$. 
\een
Notice that the first condition in \eqref{e_varphi} yields:
\begin{equation}\label{e_varphi_a}
\begin{array}{lll}
\ell(t):=\int_{-\pi}^{\pi}\varphi(t,\theta)d\theta >  2 e^m,  &
\hbox{if $t\geq -\frac{1}{m}$ }\;
 (\hbox{thus,} \; \, \ell(t)=\infty, \forall t\geq 0).
\end{array}
\end{equation}

The second condition in \eqref{e_varphi} ensures the existence of a smooth function 
$\varphi_0:]-\pi,\pi[ \rightarrow [1, \infty )$ such that 
\begin{equation}\label{e_varphi_b}
\varphi(t,\theta)=\varphi_0(\theta)\geq 1 \qquad  \forall t\geq 0, 
\end{equation}
 indeed, this is valid also in a neighborhood of the closed half strip $t\geq 0$.  Using again the first condition (recall \eqref{e_varphi_a}),
\be \label{e_varphi_b_bis}
\int_{-\pi}^0 \varphi_0(\theta)
d\theta= \int^{\pi}_0 \varphi_0(\theta)d\theta=\infty.
\ee

\begin{figure}
	\centering
\includegraphics[height=0.4\textheight]{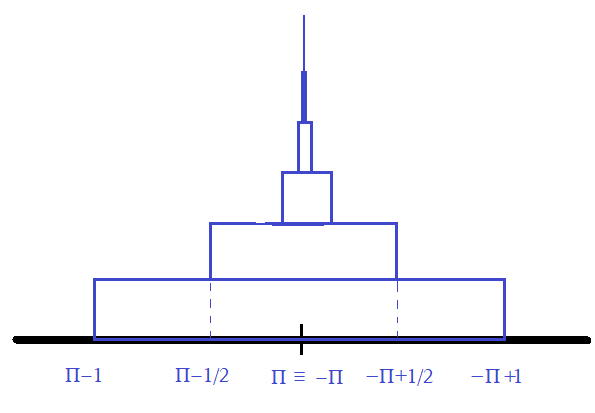}
\caption{\label{fig1} Function 
$\hat\varphi(t=0,\theta)-1$
in $(\pi-1,\pi)$ and in $ (-\pi,-\pi +1)$ (it would have infinite ``steps'' do that it diverges in 
  $\theta=\pm \pi$).  
    For $t=-1$, only the first step would appear, and a new $m$-th step would be included at each  $t=-1/m$, $m\in \N$.  
}\end{figure}

\begin{example}\label{ex1} {\em (Explicit $\varphi$)}. In order to give a construction of  $\varphi$,  start with: 
$$
\hat \varphi(t,\theta):=1+\sum_{k=1}^{\hbox{\tiny{m=Int}}[-1/t]}  e^k  \chi_{]-\pi, 
-\pi+\frac{1}{k}]\cup [\pi-\frac{1}{k},\pi[}(\theta) \qquad \forall t< 0, \; \; \theta \in ]-\pi,\pi[,
$$
where Int$[-1/t]$ denotes the integer part of $-1/t$ and $\chi$ is the characteristic function of the corresponding set (equal to 1 on  $]-\pi, 
-\pi+\frac{1}{k}]\cup [\pi-\frac{1}{k},\pi[$ and 0 otherwise), see Fig. \ref{fig1}. This function satisfies all the requirements but smoothness in its domain. In particular, $\hat \varphi$ can be continuously  extended to $(t,e^{i\pi})$ when $t<0$ and to the region $t\geq 0$ as in \eqref{e_varphi_b}. 

So, $\varphi$ can be chosen by smoothing $\hat \varphi$ ensuring: (a) it is non- decreasing with $t$, (b) it satisfies, say, $\hat \varphi \leq \varphi \leq  \hat \varphi +1$ and (c) it remains invariant under $\theta \mapsto -\theta$. These properties can be ensured by using a natural bump function to smooth each new summand, whenever\footnote{Recall also that 
$\hat\varphi$ can be easily approximated by a continuous function 
satisfying (a), (b) and (c) (replace the vertical segments in its 
graph by slightly inclined ones which are chosen symmetrically with respect to $e^{i\pi}$). Then, one can use general results of 
approximation of continuous functions by smooth ones (see  \cite[\S 2.3.5]{S_Penrose} for background, even in the analytic case).}  $t=-1/m$.  
\end{example}

\begin{rem}\label{r_marc}
  It is worth emphasizing that the conditions imposed for $\varphi$ are sufficient for our purposes but they are not optimized (recall \br the discussion in \er footnote~\ref{foot_extremoPi}). Indeed,  the different approach in \cite{MV} implies that, \br for any function $\ell(t)$ which is smooth and satisfies  $0<\ell(t)<\infty$ for $t<0$ and joins continuosly  $\ell(t)=\infty$ for $t\geq 0$, \er the choice
 $$
 \varphi(t,\theta)= \frac{1+ 
 \tan^2(\theta/2)}{1+\frac{4\pi^2}{\ell(t)^2}\tan^2(\theta/2)}
 $$  
will  also work\footnote{The author acknowledges Marc Mars for  suggesting  this idea. }, even if it does not fit exactly under our conditions. Although such an optimization is possible, our hypotheses may be enough to understand the qualitative behaviour of the model or to make rough estimates on cases such as the inflationary one.

\end{rem} 
 
 \begin{defi}\label{d_BCTCM} A {\em basic cosmological topological change model (BCTCM)}  is the manifold
 $$
 B=\left(\R\times \SSS^1\right) \setminus \{(t, e^{i\pi}): t\geq 0\}
 $$ 
 endowed with a metric $g_B$ as in \eqref{e_closed metric}  with $\varphi$ satisfying the hypotheses (a), (b) and (c)  therein, and continuously (then, smoothly) extended to  $e^{i\pi}$ when $t<0$.
 \end{defi}
 Recapitulating, we have:
 \ben
 \item From the topological viewpoint, $B$ is a cylinder, i.e., homeomorphic to $\R\times \SSS^1$.
 \item The metric $g_B$ is smooth on the whole $B$. 

\item Each slice $\{t=t_0\}$ is isometric to $\R$ when $t_0\geq 0$ and  to a cylinder of length $\ell (t_0)<\infty$ when $t_0<0$, being $\lim_{t_0\rightarrow 0}  \ell (t_0)=\infty$.

\item  For $t\geq 0$ one has (recall \eqref{e_varphi_b} and \eqref{e_varphi_b_bis}): 
\begin{equation} \label{e_et_tmayor0}
g=-dt^2+\varphi^2_0(\theta)d\theta^2=-dt^2+dx^2, \qquad x\in \R,
\end{equation}
the latter taking as a new coordinate $x(\theta)= \int_0^\theta \varphi_0(\bar \theta)d\bar \theta, \forall \theta \in ]-\pi, \pi[$.
 \een

\begin{prop}\label{p_dim2_a}
Any BCTCM is globally hyperbolic, with Cauchy hypersurfaces homeomorphic to $\SSS^1$.
The slices $t=t_0$ are Cauchy for $\{t_0<0\}$ (and they are not for $t\geq 0$).
\end{prop}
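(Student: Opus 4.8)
The plan is to exhibit a single Cauchy slice and then invoke the standard structure theory. First I would note that $t$ is a smooth time function: its gradient is $\nabla t=-\partial_t$, which is timelike, so $(B,g_B)$ is (stably) causal, and along any nonzero future-directed causal $\dot\gamma$ one has $g_B(\dot\gamma,\partial_t)=-\dot t<0$, i.e.\ $t$ is strictly increasing. Hence every causal curve meets a slice $\{t=t_0\}$ at most once and can be reparametrised by $t$; writing it as $\theta=\theta(t)$, the causal condition $-\dot t^2+\varphi^2\dot\theta^2\le 0$ together with $\varphi\ge 1$ from hypothesis (a) yields the fundamental bound $|d\theta/dt|\le 1/\varphi\le 1$. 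It then suffices to prove that one slice $\Sigma_{t_0}:=\{t=t_0\}$ with $t_0<0$ is met by every inextendible causal curve: since $\Sigma_{t_0}\cong\SSS^1$, global hyperbolicity and the homeomorphism type of all Cauchy hypersurfaces follow from Geroch's theorem (see e.g.\ \cite{BEE,O}).

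Second, given an inextendible causal curve $\gamma$ (future-directed, $t$-range $(t_-,t_+)$), I would control both ends. If $t_+<0$, lift $\theta$ to the universal cover of $\SSS^1$; the bound $|d\theta/dt|\le 1$ makes the lift $1$-Lipschitz on the bounded interval, hence convergent as $t\to t_+^-$ to a point $(t_+,\theta_+)$ of the full circle $\{t=t_+\}\subset B$, a future endpoint contradicting inextendibility, so $t_+\ge 0>t_0$. For the past end I rule out both a finite $t_-<0$ (the same $1$-Lipschitz argument produces a past endpoint $(t_-,\theta_-)\in B$) and $t_-\ge 0$: in the latter case $\gamma$ eventually lies in $\{t\ge 0\}$, which by \eqref{e_et_tmayor0} is isometric to the Minkowski half-plane $\{(t,x):t\ge 0\}\subset\Lo^2$, where the causal bound reads $|dx/dt|\le 1$, so $x(t)$ converges to a finite $x_-$ (corresponding to $\theta_-\in(-\pi,\pi)$) and again yields a past endpoint in $B$. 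Therefore $t_-=-\infty$, the range contains $(-\infty,0)\ni t_0$, and $\gamma$ crosses $\Sigma_{t_0}$ (exactly once).

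Third, for $t_0\ge 0$ I would exhibit the singular comoving observer $\gamma_\pi(t)=(t,e^{i\pi})$, $t\in(-\infty,0)$, an integral curve of $\partial_t$ and hence timelike. It is past-inextendible ($t\to-\infty$) and future-inextendible, since its only limit in $\R\times\SSS^1$ is $(0,e^{i\pi})\notin B$, so it has no future endpoint in $B$. As its image lies in $\{t<0\}$, it never meets $\{t=t_0\}$ for $t_0\ge 0$; thus those slices are not Cauchy.

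The step I expect to be the main obstacle is the control of the past end: a priori an inextendible causal curve might try to escape through the slit $\{(t,e^{i\pi}):t\ge 0\}$ or run off along the non-compact slices at $t\ge 0$ in finite coordinate time. What prevents this is exactly the combination of $\varphi\ge 1$ (giving $|d\theta/dt|\le 1$ uniformly) and the exact isometry of $\{t\ge 0\}$ with a Minkowski half-plane: both force the relevant coordinate to be $1$-Lipschitz in $t$ and hence to converge at any finite value of $t$, leaving $t_-=-\infty$ as the only inextendible past behaviour. The remaining care is bookkeeping, namely checking that each limit point genuinely lies in $B$ (automatic for $t_-<0$, where the slice is a full circle, and via the finiteness of $x_-$ when $t_-\ge 0$).
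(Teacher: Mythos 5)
Your proof is correct, but it follows a genuinely different route from the paper's. The paper verifies global hyperbolicity by checking compactness of the causal diamonds $J^+(p)\cap J^-(q)$ directly, via a case analysis on where $p,q$ sit: both in $\{t\geq 0\}$ (isometric to a Minkowski half-plane by \eqref{e_et_tmayor0}), both in the cylinder $\{t<0\}$ (whose slices are Cauchy because $\varphi\geq 1$ makes the cones narrower than those of the flat cylinder, citing \cite[Prop.~3.1]{S_Penrose}), or mixed, in which case $J^-(q)\cap\{t=0\}$ is compact and the problem reduces to the cylinder again. You instead exhibit a single slice $\{t=t_0\}$, $t_0<0$, and prove by hand that every inextendible causal curve crosses it exactly once, using the same two structural inputs (the uniform bound $|d\theta/dt|\leq 1/\varphi\leq 1$ for $t<0$ and the half-plane isometry for $t\geq 0$) to rule out finite past endpoints; you then invoke Geroch's theorem \cite{Ge} to get global hyperbolicity and the $\SSS^1$ homeomorphism type of all Cauchy hypersurfaces. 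What your approach buys is a single self-contained argument that delivers all three assertions of the proposition at once --- including the Cauchy character of the slices $t_0<0$ and the explicit failure for $t_0\geq 0$ via the future-inextendibility of $\gamma_\pi$ --- at the cost of the endpoint bookkeeping; the paper's version is shorter but leans on the cited comparison result and leaves the Cauchy/non-Cauchy statements for the slices somewhat implicit. Your handling of the delicate point (a past-inextendible curve cannot escape through the slit $\{(t,e^{i\pi}):t\geq 0\}$ or run off along the non-compact slices in finite $t$, because $x(t)$ is $1$-Lipschitz in the Minkowski coordinates and $x$ ranges over all of $\R$ by \eqref{e_varphi_b_bis}) is exactly right and is the step that would otherwise be a gap.
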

\begin{proof} As $t$ is a time function, global hyperbolicity follows by proving that $J^+(p)\cap J^-(q)$ is compact for any $p,q\in B$. This is trivial if $p,q$ lie either in the region $t\geq 0$ (as it is isometric to the closed half- plane $t\geq 0$ in $\Lo^2$ by 
\eqref{e_et_tmayor0}) or in the cylinder $t<0$, whose slices are Cauchy (apply for example \cite[Prop. 3.1]{S_Penrose}). If $t(p)<0$ and $t(q)\geq 0$, necessarily $J^-(q) \cap \{t=0\}$ is compact and, thus, lies in a compact subinterval of  $]-\pi,\pi[$ for $\theta$. Thus,   the problem  is reduced again to the case of a cylinder. 
\end{proof}
Even though the proof of this theorem has been obtained from general simple arguments, it is illustrative to find  explicit Cauchy hypersurfaces of $B$. The  comoving  observer  $\gamma_0(t)=(t,\theta=0), t\in \R$ can be also called the {\em centered  comoving  observer} as it plays a similar role as in the open cosmological case. 
However,  the {\em singular comoving  observer} $\gamma_\pi(t)=(t,\theta=\pi), t<0$ becomes specially interesting from the causal viewpoint\footnote{\label{foot_center_univ}Notice that this observer $\gamma_\pi$ becomes clearly distinguished in our construction. Moreover,  the  invariance  under reflections (c) (below \eqref{e_varphi}) made the  comoving  observer  $\gamma_0$ distinguished too. This invariance can be regarded as the $O(n=1)$ invariance of the 2-spacetime, which will be generalized to $n\geq 2$ later.  Dropping this invariance  and the additional requirements in (c) (see footnote \ref{foot_center_univ}),  one could try to redefine $\gamma_0$  at each circle $t=t_0<0$ as the  unique point $\gamma_0(t_0)\neq \gamma_\pi(t_0)$ equidistant of $\gamma_\pi(t_0)$ from both sides in the circle.  This would permit a more intrinsic characterization of the centered \br comoving \er observer. However, in general,  the so-constructed $\gamma_0$ might not be timelike (it would be  only guaranteed that $t$ would grow along it) and our  choice  rules out this possibility.}. For each $t_0<0$, let  
$$\gamma_{\pm}^{t_0}(t)= (t, \theta^{t_0}_\pm (t)), \qquad t\geq t_0
$$ 
be the two lightlike $t$-parameterized pregeodesics starting at $\gamma_\pi(t_0)$, where $\theta^{t_0}_+ (t_0)=\pi$ and \br $\theta^{t_0}_+(t)$ \er decreases with $t$ towards $0$, \br while \er $\theta^{t_0}_- (t_0)=-\pi$ and \br $\theta^{t}_- (t)$ \er increases with $t$ towards $0$.
Notice that, once these pregeodesics abandon a small neigborhood of $\gamma_\pi(t_0)$, their coordinate $\theta$ must decrease/increase until reaching 
the value $0$ (as $\varphi$ is bounded in $\R\times [-\pi+\epsilon, \pi-\epsilon]$ for any $\epsilon>0$). Moreover, they must 
arrive at the same point in the centered  comoving \ observer $\gamma_0(t'_0)$ because of the invariance of $\varphi$ under
$\theta
\mapsto -\theta$. Notice that  the topological circle given by these two lightlike segments is $E^+(\gamma_\pi(t_0))$, i.e., the future horismos  of $\gamma_\pi(t_0)$. This is composed by the points in the causal future  $\gamma_\pi(t_0)$  not included in the chronological one\footnote{See 
\cite[footnote 48]{S_Penrose} for additional background specific of the 2-dim. case.}. The causal future $J^+(\gamma_\pi(t_0))$ can be obtained as the union of all the  comoving  observers starting at $E^+(\gamma_\pi(t_0))$. As a consequence, $I^+(\gamma_\pi(t_0))$ includes  the region $t\geq 0$ except \br at most the \er  compact subset $K
\br  =J^-(E^+(\gamma_\pi(t_0))\cap \{t\geq 0\} \er
$ (which is included in 
$[0,\infty)\times [-\pi+\epsilon, \pi-\epsilon]$  for some $\epsilon>0$).

\begin{figure}
	\centering
\includegraphics[height=0.4\textheight]{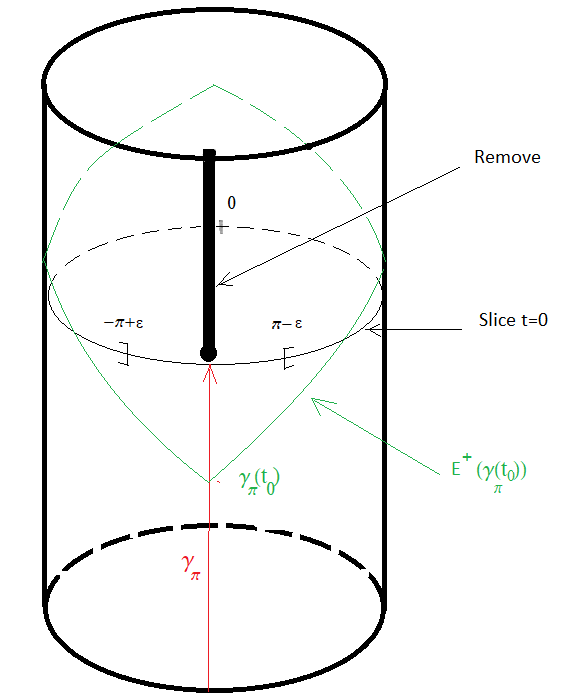}
\caption{\label{fig2} BCTCM, with the singular comoving  observer $\gamma_\pi$ in red. For $t_0<0$, $E^+(\gamma_\pi(t_0))$ (in green) is a Cauchy hypersurface homeomorphic to $\SSS^1$ which separates the spacetime in two connected open subsets. One of them is equal to  $I^+(\gamma_\pi(t_0))$. The region $\{t\geq 0\}$ is isometric to a closed half of $\Lo^2$, and $\{t\geq 0\}\setminus I^+(\gamma_\pi(t_0))$ is compact.   
}\end{figure}

Summing up, from the previous discussion (see also Fig. \ref{fig2})
\begin{prop}\label{p_sco}
In a BCTCM, the future  horismos $E^+(\gamma_\pi(t_0))$  is a Cauchy hypersurface for any point $\gamma_\pi(t_0)$, $t_0<0$, of the singular  comoving  observer. 

Moreover, the chronological future  $I^+(\gamma_\pi(t_0))$ includes  the region $t\geq 0$ except a compact subset 
and it is foliated by the Cauchy hypersurfaces 
$E^+(\gamma_\pi(t))$ with $t_0<t<0$. \end{prop}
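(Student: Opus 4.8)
The plan is to prove the two assertions separately, relying throughout on the global hyperbolicity of $B$ (Prop.~\ref{p_dim2_a}) and on the structural description of $E:=E^+(\gamma_\pi(t_0))$ obtained in the discussion preceding the statement. There $E$ appears as the embedded topological circle formed by the two lightlike segments $\gamma_\pm^{t_0}$, i.e.\ as the graph $E=\{(h(\theta),\theta):\theta\in\SSS^1\}$ of a continuous function $h$ with $\min h=t_0<0$ attained at $e^{i\pi}$; moreover $E=\partial I^+(\gamma_\pi(t_0))$, with $J^+(\gamma_\pi(t_0))=\{t\ge h(\theta)\}$ and $I^+(\gamma_\pi(t_0))=\{t>h(\theta)\}$. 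Since a horismos is always achronal, no timelike curve meets $E$ twice, so to conclude that $E$ is Cauchy it suffices to show that every inextendible timelike curve meets it at least once.

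First I would fix a future-directed inextendible timelike $\sigma$, so that $t\circ\sigma$ is strictly increasing. Because each slice $\{t=c\}$ with $c<0$ is Cauchy, $\sigma$ crosses all of them, whence $\inf(t\circ\sigma)=-\infty$ and $\sigma$ starts strictly below the graph, in $\{t<h(\theta)\}$. If $\sigma$ never entered $I^+(\gamma_\pi(t_0))$ it would remain in $\{t<h(\theta)\}\subseteq\{t<\max h\}$, so $s_\infty:=\sup(t\circ\sigma)$ would be finite. The key observation is then that the future portion $\sigma\cap\{t\ge c\}$ would lie in the region $\{c\le t\le s_\infty,\ t\le h(\theta)\}$, which is compact in $\R\times\SSS^1$ and, crucially, disjoint from the deleted ray $R=\{(t,e^{i\pi}):t\ge0\}$ (at $\theta=e^{i\pi}$ this region forces $t\le h(e^{i\pi})=t_0<0$). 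Hence this future portion would be a future-inextendible causal curve imprisoned in a compact subset of $B$, contradicting strong causality. Therefore $\sigma$ must enter $I^+(\gamma_\pi(t_0))$, necessarily crossing $E$, and $E$ is Cauchy. (Alternatively one may invoke the general fact that a compact achronal edgeless hypersurface in a globally hyperbolic spacetime is Cauchy; the argument above makes explicit the role of the deleted ray.)

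For the second assertion, the statement that $I^+(\gamma_\pi(t_0))$ contains $\{t\ge0\}$ up to the compact set $K=J^-(E)\cap\{t\ge0\}$ is exactly what was recorded in the preceding discussion, so I would only restate it. The substance is the foliation. Each leaf $E^+(\gamma_\pi(t))$, $t\in(t_0,0)$, is Cauchy by the first part. Disjointness is immediate: if $x\in E^+(\gamma_\pi(t_1))\cap E^+(\gamma_\pi(t_2))$ with $t_1<t_2$, then $\gamma_\pi(t_2)\in I^+(\gamma_\pi(t_1))$ (the singular observer is timelike), so $x\in J^+(\gamma_\pi(t_2))\subseteq I^+(\gamma_\pi(t_1))$, contradicting $x\notin I^+(\gamma_\pi(t_1))$; the same computation shows each leaf is contained in $I^+(\gamma_\pi(t_0))$.

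It remains to cover $I^+(\gamma_\pi(t_0))$ by these leaves, which is where I expect the main difficulty. Given $x\in I^+(\gamma_\pi(t_0))$, I would set $T:=\sup\{t\in[t_0,0):x\in J^+(\gamma_\pi(t))\}$ and check that $x\in E^+(\gamma_\pi(T))$ with $T\in(t_0,0)$: openness of $\ll$ gives $T>t_0$ and $x\notin I^+(\gamma_\pi(T))$, while closedness of the causal relation in a globally hyperbolic spacetime gives $x\in J^+(\gamma_\pi(T))$. The delicate point is $T<0$, i.e.\ that $x$ eventually leaves $J^+(\gamma_\pi(t))$ as $t\uparrow0$. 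This is precisely where the growth hypothesis \eqref{e_varphi} enters: for fixed $\theta_\ast\ne e^{i\pi}$ the horismos height satisfies $h_t(\theta_\ast)-t=\int_{\theta_\ast}^{\pi}\varphi\,d\theta\ge e^{m}$ with $m\sim\mathrm{Int}[-1/t]$, since $\varphi\ge m e^m$ on a $\theta$-interval of length $1/m$ adjacent to $e^{i\pi}$ while the lightlike geodesic stays at $t\ge -1/m$; thus $h_t(\theta_\ast)\to+\infty$ as $t\uparrow0$, forcing $x\notin J^+(\gamma_\pi(t))$ for $t$ close to $0$. Establishing this blow-up rigorously (controlling the geodesic while $\varphi$ diverges) is the hard part; granting it, $x\in E^+(\gamma_\pi(T))$, and the leaves give a disjoint covering of $I^+(\gamma_\pi(t_0))$, continuous in the parameter $T$, that is the desired Cauchy foliation.
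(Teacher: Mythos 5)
Your proof is correct, and it follows the same geometric picture the paper uses; the difference is one of rigor rather than of route. The paper does not give a formal proof of this proposition at all: it is presented as a summary ("Summing up, from the previous discussion") of the description of $E^+(\gamma_\pi(t_0))$ as the circle formed by the two null segments $\gamma_\pm^{t_0}$, together with the identification of $J^+(\gamma_\pi(t_0))$ as the union of the comoving observers emanating from that circle. You take exactly this structural input but supply the two arguments the paper leaves implicit. First, for the Cauchy property, your non-imprisonment argument (an inextendible timelike curve avoiding $I^+(\gamma_\pi(t_0))$ would be future-imprisoned in the compact set $\{c\le t\le \max h,\ t\le h(\theta)\}$, which you correctly observe misses the deleted ray because $h(e^{i\pi})=t_0<0$) is a clean substitute for the paper's tacit appeal to the fact that a compact achronal edgeless hypersurface in this globally hyperbolic cylinder must be Cauchy. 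Second, and more substantially, the covering part of the foliation claim is nowhere argued in the paper, and your estimate $h_s(\theta_*)\ge s+e^{m}$ for $s\ge -1/m$, obtained by integrating $\varphi\ge m e^{m}$ over the angular interval of length $1/m$ adjacent to $e^{i\pi}$ along the null geodesic (where $t\ge s\ge -1/m$ automatically, since $t$ increases along it), is precisely the point where hypothesis \eqref{e_varphi} is needed and is already rigorous as you state it — the "hard part" you flag is in fact done, since the only control required on the geodesic is the trivial lower bound on its $t$-coordinate. Your disjointness argument and the sup-construction of the leaf through a given point (using openness of $I^+$ and closedness of $J^+$ in a globally hyperbolic spacetime) are standard and correct. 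In short: same approach, but your write-up proves the foliation statement, which the paper only asserts.
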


\begin{rem}\label{r_propag0} 
By a Cauchy hypersurface $S$ we mean a subset which is crossed exactly once by any inextendible timelike curve (then,  it is necessarily a topological hypersurface but perhaps non-smooth); indeed,  $E^+(\gamma_\pi(t_0))$ lies exactly  under these  minimal hypotheses. 
Once such a hypersurface is obtained, general results  ensure  the existence of an {\em acausal} one (i.e., causal curves cannot intersect $S$ in more than a point) \cite{Ge}, then   a smooth spacelike  one \cite{BS03} and, finally,   a foliation of the whole spacetime by this type of Cauchy hypersurfaces\footnote{This foliation is also endowed with a global orthogonal splitting type $(\R \times S, -\Lambda dt^2-g_t)$, which is not used here, see also the review \cite{S_Penrose} for background.} \cite{BS05}. Moreover, as the $t$-
slices $t<0$  are Cauchy  by Prop. \ref{p_dim2_a},  one can choose one  of them $\Sigma$ as the initial data for the Cauchy problem\footnote{In the case $n\geq 2$ the corresponding slices will have positive constant curvature, which can be regarded as an extra hypothesis for the Cauchy problem.}. Following  \cite{BS06}, $\Sigma$ can also be included  in a Cauchy slicing (see also \cite{Ringstrom}). 

However,   the readers can convince theirselves that all this can be done directly in the very particular case of a BCTCM; in fact, the explicit constructions above may allow one to understand better how the model works. Notice also that the slicings are consistent with Geroch's theorem \cite{Ge2} which asserts that, in any causally well behaved compact  region of a spacetime limited by two disjoint compact spacelike hypersurfaces (with no boundary) $S$ and $S'$, these two hypersurfaces, as well as any other compact spacelike one therein, must be homeomorphic. 
\end{rem}

\begin{rem} \label{r_further} 
(1) The curvature of the metric \eqref{e_closed metric} is  
$K=\frac{1}{\varphi} \frac{\partial^2 \varphi}{\partial t^2}$ (see for example \cite[Ch. 3, Prop.  44]{O}). Taking into account that, essentially, $\varphi$ is required to grow fast with $t$ for every $\theta$ is close  to   $\pm \pi$ and, eventually,  ``stabilize'' (being constant) at some $t<0$ (but not at the limit  $\theta=\pm \pi$), then $\varphi$ can be chosen so that the timelike convergence  condition (i.e., $K\leq 0$ in the case of surfaces)  holds everywhere but close to   the comoving singular observer $\gamma_\pi$ and small $t<0$.  

(2)  One could extend the definition of BCTCM by permitting that all the observers $\gamma_{\theta_0}(t)=(t,\theta_0), t<0$ are singular (in the sense of inextensible to $t=0$) for $\theta_0$ in an interval around $e^{i\pi}$ (i.e., $\theta_0\in (\pi-\delta_0, \pi] \cup [-\pi, -\pi+\delta_0)$, for some $\delta_0>0$).  In principle, this case would be straightforward from the studied one, anyway, other  weakenings  of the hypotheses on $\varphi$ are possible  (recall Remark \ref{r_marc})  and  might deserve a further study.
\end{rem}

\subsection{Spatial dimension $n\geq 2$}\label{s42} This case will be a direct extension of the previous one by using spherical coordinates  $s\in (0,\pi), \gs$ in $\SSS^n$ as in \eqref{e_spherical} (notice that, here, we start using $s$ instead of $r$ for the unit sphere $\SSS^n$). However, now the topological change will imply a curvature change.

Recall that, for $n=1$, $s=|\theta|$ plays the role of the radial spherical coordinate and the pairs $\pm\theta$ play the role of  a sphere, which is identifiable to $\gs$ when $s=1$ and collapses to a single point when $s=0,\pi$. We will use either $e^{i\pi}$ or  $s=\pi$ to denote the antipodal point of $s=0$ for our choice of spherical coordinates in $\SSS^n$.

 \begin{defi}\label{d_BCTCCM} A {\em basic cosmological  topological and  curvature change model (BCTCCM)}  is the manifold
 $$
 \R\times \SSS^n \setminus \{(t, e^{i\pi}): t\geq 0\}
 $$ 
 endowed with the metric 
\be \label{e_metrica_gen_curv_change}
g=-dt^2+ \varphi^2(t,s) ds^2+ S_{k(t)}^2(r(t,s)) g_{\SSS^{n-1}} 
\qquad (t,s)\in \R\times ]0,\pi[
\ee
  extended  naturally   to $s=0$, as well as to $e^{i\pi}$ when $t<0$, where:
  \begin{itemize}
  \item   $\varphi$ satisfies the hypotheses (a), (b), (c) below formula \eqref{e_closed metric}  (including  \br the part of (c) mentioned \er in footnote \ref{foot_extremoPi}), 
\item  $r$ is defined on $\R\times [0,\pi]$ as:
  $$
  r(t,s)= \int_0^s\varphi(t,\bar s)d\bar s
  $$
  \item we define
 $k(t)= \pi^2  /r(t,\pi)^2$, in particular, $k(t)=0$ for $t\geq 0$, in agreement with \eqref{e_curvat_top_change}.
  \end{itemize}
 \end{defi}

\begin{thm}\label{t_closed} Any BCTCCM is a smooth spacetime satisfying:

\begin{enumerate}
\item All the slices $t=t_0$ have constant curvature isometric to 
the  sphere of extrinsic radius $r(t_0,\pi)/\pi$, 
if $t_0<0$ 
and to
$\R^n$ otherwise.

\item It is globally hyperbolic, with Cauchy hypersurfaces homeomorphic to $\SSS^n$. In particular, the slices $t=t_0<0$ are Cauchy.
\end{enumerate}
\end{thm}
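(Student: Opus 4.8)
The plan is to read the metric \eqref{e_metrica_gen_curv_change} so that, on each slice, $r$ becomes an honest geodesic radial coordinate, and then to settle causality by a direct analysis of inextendible timelike curves. For assertion (1), fix $t_0$ and restrict to $\{t=t_0\}$: since $dr=\varphi(t_0,s)\,ds$ along the slice, one has $\varphi^2(t_0,s)\,ds^2=dr^2$, so the induced metric is $dr^2+S_{k(t_0)}^2(r)\,g_{\SSS^{n-1}}$ with $r$ ranging over $[0,r(t_0,\pi)]$. By \eqref{e_spherical} this is precisely the constant curvature metric $g_{k(t_0)}$, and the definition $k(t)=\pi^2/r(t,\pi)^2$ makes $r$ sweep exactly the full interval $[0,d_{k(t_0)}]$. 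Thus for $t_0<0$ the slice is a complete round sphere of extrinsic radius $1/\sqrt{k(t_0)}=r(t_0,\pi)/\pi$, while for $t_0\geq0$ one has $k(t_0)=0$, $S_{k(t_0)}(r)=r$ and $r(t_0,\pi)=\infty$ by \eqref{e_varphi_b_bis}, so the slice is Euclidean $\R^n$.

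This reading is legitimate only once $g$ is known to be a smooth Lorentzian metric on the whole manifold, and here lies the crux of the theorem (and, I expect, its main obstacle). On the interior $s\in(0,\pi)$ smoothness is clear from Lemma~\ref{l1}. At the centre $s=0$, hypothesis (c) gives $\varphi\equiv1$ in a neighbourhood, so $r=s$ there and $g$ takes the exact shape treated in Theorem~\ref{t_open}; the cartesian computation \eqref{e_cartesian}--\eqref{e_cartesian2} transfers verbatim. The delicate point is smoothness at the antipode $s=\pi$ (the spacelike cut locus) for $t<0$. I would switch to the $t$-independent polar/cartesian coordinates $\rho=\pi-s$, $y^i=\rho\,u^i$ centred at the antipode, and to the antipodal distance $\tilde r:=r(t,\pi)-r(t,s)=\int_s^\pi\varphi(t,\bar s)\,d\bar s$. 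By hypothesis (c), $\varphi(t,\cdot)\equiv c(t)$ near $s=\pi$ for each $t<0$, so $\tilde r=c(t)\rho$; and since $d_{k}=\pi/\sqrt{k}$ forces the identity $S_{k}(d_{k}-\tilde r)=S_{k}(\tilde r)$ (using $\sin(\pi-x)=\sin x$), the warping factor becomes $S_{k(t)}(r)=S_{k(t)}(c(t)\rho)$. The spatial part is then $c(t)^2\,d\rho^2+S_{k(t)}^2(c(t)\rho)\,g_{\SSS^{n-1}}$, and the very computation of Theorem~\ref{t_open} applies: smoothness reduces to the analyticity of $S_k(x)/x$ and of $x^{-2}\bigl(1-S_k(x)/x\bigr)$ in \eqref{e_cartesian1}--\eqref{e_cartesian2}, now composed with the smooth map $t\mapsto(k(t),c(t))$. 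Crucially, no $dt\,ds$ or $dt$--angular cross terms are generated, because the spatial coordinate change $(s,u)\mapsto y$ does not involve $t$; this keeps $g$ block-diagonal and reduces everything to the spatial factor.

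For assertion (2) I would show directly that each slice $\Sigma_{t_0}:=\{t=t_0\}$ with $t_0<0$ is a Cauchy hypersurface in the sense of Rem.~\ref{r_propag0}; global hyperbolicity then follows. Since $dt$ is timelike, $t$ strictly increases along every future causal curve, so $\Sigma_{t_0}$ is acausal and met at most once. For the ``at least once'' part, let $\gamma$ be inextendible timelike and let $(a,b)$ be the range of the strictly monotone $t\circ\gamma$. If $a>-\infty$, the past portion of $\gamma$ would be imprisoned in a compact set --- the slab $\{a\leq t\leq a+\delta\}\times\SSS^n$ when $a<0$, or a spatially bounded slab in the region $\{t\geq0\}$, isometric by \eqref{e_et_tmayor0} to a closed half of $\Lo^{n+1}$, when $a\geq0$ (the light-speed bound there confining the spatial coordinate over a finite $t$-interval) --- contradicting past-inextendibility in the strongly causal (indeed stably causal, via $t$) spacetime. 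Hence $a=-\infty$; and if $b\leq t_0$ then $b<0$ and the same imprisonment argument on the future portion gives a contradiction, so $b>t_0$. Therefore $t_0\in(a,b)$ and $\gamma$ meets $\Sigma_{t_0}$, which is thus Cauchy and makes the spacetime globally hyperbolic. Finally, since $\Sigma_{t_0}$ is homeomorphic to $\SSS^n$, every Cauchy hypersurface is homeomorphic to $\SSS^n$ (recall Rem.~\ref{r_propag0}). Note that this argument deliberately does not claim $b=+\infty$: the future end of $\gamma$ may close up at $t=0$, which is exactly the inextendible singular comoving observer anticipated in the closed model.
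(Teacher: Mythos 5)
Your overall strategy is sound and, in two places, genuinely different from the paper's proof: (i) at the antipodal cut locus you keep the $t$-independent coordinate $\rho=\pi-s$ (with cartesian $y^i=\rho u^i$) instead of passing to the $t$-dependent radial distance $\bar r=\varphi_\pi(t)\bar s$ as the paper does, so the metric stays block-diagonal, $-dt^2+c(t)^2d\rho^2+S_{k(t)}^2(c(t)\rho)\,\gs$, and you avoid having to absorb the cross terms of \eqref{e_cross}; the reduction to the analyticity of $S_k(x)/x$ and $x^{-2}(1-S_k(x)/x)$ composed with $x=c(t)\rho$ is correct (the identity $S_k(d_k-\tilde r)=S_k(\tilde r)$ is exactly the paper's \eqref{e_Sbarra_r}). (ii) For causality you prove directly that the slices $t_0<0$ are Cauchy via a non-imprisonment argument in the stably causal spacetime, whereas the paper reduces to the $n=1$ analysis of Prop.~\ref{p_dim2_a} by noting that $J^-(q)\cap\{t=0\}$ is compact for $q$ in the Minkowskian region $t\geq 0$. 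Both routes work; yours is more self-contained, the paper's recycles the two-dimensional picture.

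There is, however, one genuine gap. You dismiss smoothness on the interior $s\in(0,\pi)$ as ``clear from Lemma~\ref{l1}'', but that lemma only gives smoothness of $(t,r)\mapsto S_{k(t)}(r)$ \emph{provided $k$ is a smooth function of $t$}, and in a BCTCCM $k$ is not given a priori smooth: it is defined piecewise as $k(t)=\pi^2/r(t,\pi)^2$ for $t<0$ and $k(t)=0$ for $t\geq 0$, with $r(t,\pi)=\ell(t)/2\to\infty$ as $t\nearrow 0$. Smoothness of $k$ at $t=0$ (i.e.\ that all one-sided derivatives vanish) is precisely what the quantitative growth hypothesis \eqref{e_varphi}--\eqref{e_varphi_a} is for: since $\ell(t)\geq 2e^m$ whenever $t\geq -1/m$, one gets $k(t)/|t|^s\to 0$ for every $s$, hence $k\in C^\infty$ with all derivatives zero at $0$. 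Without this verification the smoothness of $S_{k(t)}^2(r(t,s))\,\gs$ across $\{t=0\}$ at interior points $0<s_0<\pi$ --- which is exactly where the curvature changes from positive to zero --- is unjustified, and your proof never uses the $me^m$ growth in condition (b) at all, which should have been a warning sign. Once this estimate is inserted, the rest of your argument goes through.
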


\begin{proof} (1) By construction,  $\varphi(t,s)$ was smooth; then, so is $r(t,s)$, as well as $k(t)$, the latter  when $t\neq 0$. To check  smoothness at $t=0$, notice that,  from \eqref{e_varphi_a}, 
$\sqrt{k(t)}=2\pi / \ell(t)$  and, whenever $1/|t|\leq m$ then $ \ell(t)\geq  2 e^m$. Thus,
$$
\lim_{t\nearrow 0}\frac{k(t)}{|t|^s}=
(2\pi)^2 \lim_{t\nearrow 0}\frac{1}{|t|^s \ell(t)^2}
\leq \pi^2 \lim_{m\nearrow \infty}\frac{m ^s}{e^{2m}}=0, 
$$
for all $s \in \N$, which implies that the all $s$-th derivatives \ of $k(t)$ vanish.  
%
To analize $g$, consider first the function  $\varphi_0$ in \eqref{e_varphi_b} and change the coordinates  $(t,s)$ by $(t,r)$ in a neighborhood of the region $t\geq 0$ so that $dr=\varphi_0(s)ds$ and  
\begin{equation}
\label{e_euclidea_varphi00} g=-dt^2+ \varphi_{0}^2(s)ds^2+ S_{k(t)}^2(r(t,s))\gs= -dt^2+ dr^2+ S_{k(t)}^2(r)\gs .  \end{equation} 
 This  is  a smooth metric as in Theorem~\ref{t_local} and, so, the  smoothness of $g$  in the whole region
$(t,s)\in \R\times ]0,\pi[$  is straightforward. The required constant curvature of  the $t_0$-slices follows from \eqref{e_euclidea_varphi00} when $t\geq 0$ and, otherwise, by  changing  the coordinate $s$ by $r(t_0,s)$ at each slice with $t_0< 0$. 
For the smoothness of the extension to $s=0$, the proof of Theorem~\ref{t_open} works with no modification as  $\varphi$ is constantly equal to 1 around $s=0$ (\br even though this could be relaxed, \er recall footnote~\ref{foot_extremoPi}). 

Next, let us check the smoothness of  the extension of $g$ to  $s=\pi$ when $t<0$. Essentially, this will also be  reduced to the proof of the  case $s=0$ in Theorem~\ref{t_open}. For this purpose,  choose $t_1<0$ and  let $\epsilon_0>0$ such that $\varphi(t,s)$ is independent of $s$  around $(t_1,\pi)$, that is, 
$$
 \varphi_\pi(t):= \varphi(t,\pi)   = \varphi (t,s), \qquad 
\forall t\in [t_1-\epsilon_0, t_1+\epsilon_0], \quad \forall s\in [s_0:=\pi-\epsilon_0,\pi].$$ 
 Then, in this region define:
\begin{equation}\label{e_rts}
 r(t,s) =\int_0^s\varphi(t,\bar s)d\bar s= 
r(t,\pi)-  \varphi_{\pi}(t)  (\pi-s), \qquad \hbox{when} \;  s_0 < s \leq \pi . 
\end{equation}
Let us introduce the functions $\bar s= \pi-s$ and $\bar r(t,s)=r(t,\pi)-r(t,s)$.  We have just to prove the smoothness  $g$ (given by  \eqref{e_metrica_gen_curv_change}) in coordinates $(t,\bar r)$ at $\bar{r}=0$ (i.e., $\bar s=0$). Using \eqref{e_rts}, 
$$
 \bar r(t,s)= \varphi_{\pi}(t)(\pi-s)= \varphi_{\pi}(t)\bar s, \qquad d \bar r = \varphi_{\pi}(t) d \bar s + \dot \varphi_{\pi}(t) \bar s dt,  
$$
 where $\dot \varphi_{\pi}$ denotes derivative, and we have 
\begin{equation}\label{e_cross}
\varphi(t,s)^2 d  s^2=\varphi_{\pi}(t)^2 d \bar s^2 =
d\bar r^2 + \frac{\dot \varphi_{\pi}(t)^2}{\varphi_{\pi}(t)^2} \bar r^2 dt^2 -\frac{\dot \varphi_{\pi}(t)}{\varphi_{\pi}(t)} 2 \bar r d\bar r dt.
\end{equation}
As $\bar s^2$ is a smooth function in our region, so is $\bar r^2$ as well as $\bar r d\bar r= d(\bar r^2/2)$. 
\br Substituting \eqref{e_cross} in \eqref{e_metrica_gen_curv_change}, 
the last two terms in \eqref{e_cross} become irrelevant for  the smoothness of $g$. So, the problem reduces to check the smoothness at $\bar r=0$ of the terms  
\begin{equation}\label{e_Marc}
d\bar r^2 +S_{k(t)}(r(t,s))\gs  
\end{equation}
where the function $S_{k(t)}(r(t,s))$ must be expressed in the coordinates $(t,\bar r)$. Using the expressions of $S_k$,  $k(t)$ for a BCTCCM and  $\bar r(t,s)$ above: 
\begin{equation}\label{e_Sbarra_r}
\begin{array}{rl}  S_{k(t)}(r(t,s))= &  \frac{r(t, \pi)}{\pi}  \sin \left(\pi \, \frac{r(t,s)}{r(t,\pi)} \right)
 =  \frac{r(t, \pi)}{\pi}  \sin \left(\pi \, \left(1-\frac{\bar r(t,s)}{r(t,\pi)}\right) \right)
 \\= &
\frac{r(t, \pi)}{\pi}  \sin \left(\pi \, \frac{\bar r(t,s)}{r(t,\pi)} \right)  = 
S_{k(t)}(\bar r(t,s)).
\end{array}
\end{equation}
That is, $S_{k(t)}(r(t,s))$ becomes $S_{k(t)}(\bar r)$
and the smoothness of \eqref{e_Marc} follows from  Theorem~\ref{t_open}. \er

(2) All the arguments in Prop. \ref{p_dim2_a} can be applied for this part. In particular, applying \eqref{e_euclidea_varphi00}, \eqref{e_curvat_top_change},
 the metric in 
  the region  $t\geq 0$ becomes 
\begin{equation}
\label{e_euclidea_varphi0} g=-dt^2+ \varphi^2_0(s)ds^2+ r^2(s)\gs= -dt^2+ dr^2+ r^2\gs 
\end{equation} 
with $r(s)=\int_0^s\varphi_0(\bar s) d\bar s \in ]0,\infty[$, which is isometric to the standard half space $t\geq 0$ in $\LL^{n+1}$.
So, for any  point $q$ with $t(q)\geq 0$, necessarily $J^-(q)\cap \{t=0\}$ is compact. Then,   the slices $t=t_0<0$ (which are homemorphic to a sphere) are Cauchy, and  those with $t_0\geq 0$ (homemorphic to $\R^n$) cannot.    
\end{proof}


\begin{rem}\label{r_new} 
Recall that the BCTCCM metric \eqref{e_metrica_gen_curv_change} is more general than the original one in Theorem \ref{t_local},  because if the coordinate $s$ is replaced by  $r(t,s)$ on the whole spacetime then additional cross terms (as in \eqref{e_cross}) may appear. However,  this can be done just around the comoving singular observer, Rem. \ref{r_further}.

\end{rem}

\subsection{Further transitions}\label{s43}
Once the BCTCCM has been constructed, one can  combine it directly with the open model to obtain a smooth transition from positive to negative curvature, namely: (i) choose a smooth function $k(t)$ with, say,   
$k(t)>0$, if $t<0$, and $k(t)<0$ \br if $t>0$, \er
(ii) in the region $t\leq 0$ choose the metric of the BCTCCM in Defn. \ref{d_BCTCCM},    
(iii) in the region $t>0$ choose the metric of the open cosmological model  in Theorem \ref{t_open}  with the following caution:  rewrite this metric  by using the coordinate $s$ and the function $\varphi_0$ in \eqref{e_euclidea_varphi0} (so that it will  match smoothly with the BCTCCM at $t= 0$). 
It is worth pointing out, about this procedure:
\ben 
\item The change of coordinates in  the step (iii) rellabels the  comoving observers, but these observers are not modified by such a change (simply, the coordinates $r$ is changed into $s$, and the coordinate $t$ is not involved in such a change).

\item Such a model with strict transition from $k(t)<0$ to $k(t)>0$ is again globally hyperbolic with {\em compact} Cauchy hypersurfaces. Indeed, one can reason this as in the proof of Theorem \ref{t_closed} and Prop. \ref{p_dim2_a}. 

However, the following more straightforward reasoning holds. As argued in the proof of Theorem \ref{t_open},  the cones of the constructed model in $t>0$  are narrower than those of  $\Lo^{n+1}$ spacetime (in the chosen coordinates) and, morever, $\Lo^{n+1}$ is the metric of the original BCTCCM in this region.  Thus, the Cauchy hypersurfaces and foliations obtained for the BCTCCM in \S \ref{s42} remain Cauchy for the strict transition here. 

\item The fact that each slice $\{t=t_0\}$   was a Cauchy hypersurface for $t_0<0$ but it  is only a partial Cauchy one for $t_0>0$ becomes  geometrically evident. 
However, this  might not be evident for the  comoving  observers. In fact, these slices are associated with  their restspaces. The  infinitesimal  and, eventually, local measures of these spaces may be achieved as a consequence of the principle of equivalence, but it is not straightforward how to make  global measurements. Recall that the
``disappearance'' of the  singular comoving  observer cannot be seen directly by any (comoving  or not) observer, as the 
 spacetime is globally hyperbolic and, thus, free of naked singularities.
\een

As commented in the Introduction and Rem. \ref{r1}, the possibility of  the existence of a spacetime
in which the topology of
certain geometrically preferred sections is changing in time have been considered some times in the literature. The starting point was  Stephani's study of  the class of spacetimes embeddable in a flat five-dimensional space \cite{St} (see also \cite[Chapter 15]{St_et_al}), which included the now so-called Stephani Universes \cite{Kr3}. 
Especially, Krasi\'nski and later Sussman  
developed both the local  \cite{Kr1, Su0} and the global viewpoints   \cite{Kr2, Su} (see also the short overview in \cite[p. 675]{Kr2} and the book \cite{Kr3}).  However, their viewpoint is different to ours. 

In the case of Krasi\'nski \cite{Kr1, Kr2}, de Sitter 4-spacetime serves as the qualitative model  for the curvature sign changing foliations (see 
\cite{Kr2}). The leaves of the foliation are not the slices of the original universal time $t$ in \eqref{e_metrica_gen} (which is generically a priviledged time function, in a similar way as the FLRW case); thus, the leaves are not orthogonal to the comoving observers at $\partial_t$. Moreover,   the coordinate $r$ is considered globally and, as pointed out in Remark \ref{r1} (2)  (see also the Appendix), this may introduce smoothability issues. 
Sussman \cite{Su0, Su} considered an  expression of the 4-metric which  separates the cases of   positive, negative and zero spatial curvature (see  formulas (1) and  (2) in these references) and he studied systematically the cases  of  one, two or zero \br comoving \er centers permitted by $O(3)$ symmetry.
 In comparison,  our direct approach gives a  straight geometric picture where both global hyperbolicity and an  explicit Cauchy slicing emerge naturally. 
  
Other topological transitions between spatially
compact and non-compact universes as those in \cite[\S 5.2]{KM} are quite different to ours. 



\section{Conclusions}

We have carried out a direct study of a class of simple cosmological models  (related to a more general class of spacetimes studied  in dimension 4 by Stephani  \cite{St, St_et_al}) 
with  a universal time function $t$ giving  rise to freely following \br comoving \er observers  whose restspaces $\{t=t_0\}$ have  constant  curvature $k(t_0)$ and vary with $t_0$, this variation including its sign. 
We have focused on  a rigurous mathematical presentation of the models, which permit a direct comparison with usual hypotheses on isotropy and homogeneity.  From the local viewpoint, this includes a detailed study of the regularity of the metric. From the global one, our models are globally hyperbolic spacetimes and we have found  two natural classes. The {\em open models} ($k(t)\leq 0$ everywhere) have a simple intuitive global  structure.  They    distinguish a {\em centered   comoving  observer} which, in certain sense, is the center of the spatial expansion or contration governed by $k(t)$. The {\em closed models} ($k(t)> 0$ somewhere) are much subtler globally. This happens because the $t$-slices must present a topology change (from $\SSS^n$ when $k(t)>0$ to $\R^n$), which has to be compatible with the rigid product topological structure of  any globally hyperbolic spacetime  (that is, the topology $\R\times \Sigma$, where  $\Sigma$ is any Cauchy hypersurface). In a natural way, this leads to the existence of  a   {\em singular  comoving  observer  $\gamma_\pi$} (in addition to the previous  centered one). From the global viewpoint, this is the truly priviledged observer, as it makes apparent the Cauchy splitting; in fact,  the centered observer can be regarded only as the ``farthest'' one at each $t$-slice (see footnote \ref{foot_center_univ}). 

 These models  release some possibilities which might fit in current  Cosmology. For example, the open models are consistent with a flat space at some ``universal instant'' $\{t=t_0\}$ and a bouncing therein (Ex. \ref{e_sinh}, footnote \ref{f_sinh}). The  closed ones might apply to  inflationary processes along the singular  comoving  observer $\gamma_\pi$,  
which become the key for both, the spatial topology change and the  center of the expansion.  In fact,  $\gamma_\pi$  lies in the region  $t<0$ of the time function $t$ associated with the constant curvature slices, but the spacetime has no naked singularites. So, the singular \br comoving \er observer will  ``live'' for every instant of any Cauchy time function, as any other (inextendible) observer. 
 These features might attract the attention of the community and give rise to  observational issues (recall \cite{Ellis1,Ellis2})  to be studied further.

\section*{Acknowledgments}
 The author warmly acknowledges useful and encouraging discussions with Rodrigo \'Avalos (UF. do Cear\'a, Fortaleza) and   the  reading, comments  and suggested references by  Marc Mars (U. Salamanca), Jos\'e M.M. Senovilla and Ra\"ul Vera  (both at UPV-EHU, Bilbao), \br as well as the careful reading and suggestions by the referee. \er  
  Partially supported by
the grants A-FQM-494-UGR18 (Junta de Andaluc\'ia/FEDER), PID2020-116126GB-I00 (MCIN/ AEI/10.13039/501100011033), 
 and the framework IMAG/  Mar\'{\i}a de Maeztu,   CEX2020-001105-MCIN/ AEI/ 10.13039/501100011033.
 

\section*{Data Availability and conflict of interest statements}
Data sharing is not applicable to this article as no datasets were generated or analysed during the current study.

The author states that there is no conflict of interest.

\section*{\br Appendix:  expression of the Ricci 
tensor \er }

\br
The computation of the Ricci tensor   of the fundamental metric in Theorem  \ref{t_local}  can be carried out by using its warped structure as in \cite[Corollary 7.43]{O}. Indeed, for $X,Y \in $ Span$\{\partial_t, \partial_r\}$ one has:
$$
\hbox{Ric}(X,Y)=-\frac{n}{S_{k(t)}(r)} \hbox{Hess}(S(t,r))(X,Y)
$$ 
with $S(t,r)$ as in \eqref{e1} with Hessian expressable in terms of the one of $S_k(r)$:
$$
\begin{array}{rlll}
\partial_r S_k(r)= C_k(r) & & &
\partial_k S_k(r)= \frac{1}{2k}(rC_k(r)-S_k(r)) 
\\
\partial_r^2 S_k(r)= -k S_k(r) & & &
\partial_{rk}^2 S_k(r)= -\frac{r}{2} S_k(r)
\end{array}
$$
$$
\begin{array}{l}
\partial_k^2 S_k(r)= \frac{1}{4k^2}\left(-3rC_k(r)+(3-kr^2)S_k(r)\right), 
\end{array}
$$ 
(notice $\partial_k C_k(r)=-rS_k(r)/2$, 
$\lim_{k\rightarrow 0}\partial_k S_k(r)=-r^3/6$, 
$\lim_{k\rightarrow 0}\partial^2_k S_k(r)=r^5/60$). Thus, $\partial_t S(t,r)=k'(t) \partial_k S_{k(t)}(r)$ and:
$$
\begin{array}{llll}
\partial_r^2 S(t,r)= -k(t) S_{k(t)}(r) & & &
\partial_{rt}^2 S(t,r)= -\frac{k'(t) r}{2} S_{k(t)}(r)
\end{array}$$
$$
\begin{array}{l}
\partial_{tt}^2 S(t,r)= \frac{2k(t)k''(t)-3k'(t)^2}{4 k(t)^2}(rC_{k(t)}(r)-S_{k(t)}(r))-  \frac{k'(t)^2}{4k(t)} r^2 S_{k(t)}(r) 
\end{array}
$$
So, the Ricci tensor for the base of the warped product is:
$$
\begin{array}{llll}
\hbox{Ric}(\partial_r,\partial_r)=nk(t)
 & & &
\hbox{Ric}(\partial_r,\partial_t)= \frac{n}{2}k'(t) r
\end{array}$$
$$
\begin{array}{l}
\hbox{Ric}(\partial_t,\partial_t)
= -n \frac{2k(t)k''(t)-3k'(t)^2}{4 k(t)^2}(\frac{r}{T_{k(t)}(r)}-1)+n 
  \frac{k'(t)^2}{4k(t)} r^2 
\end{array}
$$
Now, if $V,W$ are vectors tangent to the fiber $(S^{n-1},\gs)$ one has Ric$(X,V)=0$ (as in any warped product) and 
$$
\hbox{Ric}(V,W)=[(n-1)-S \Delta S+(n-2)dS(\hbox{grad}(S))] \; \gs(V,W)
$$
where $\Delta S$, $\hbox{grad}(S)$ denote, resp., the Laplacian and gradient of $S(t,r)$ for the metric $-dt^2+dr^2$, thus:
$$
\begin{array}{rl}
dS(\hbox{grad}(S))= & 
-\frac{k'(t)^2}{4k^2}(rC_k(r)-S_k(r))^2+C_k(r)^2
\\
\Delta S= &  -\frac{2k(t)k''(t)-3k'(t)^2}{4 k(t)^2}(rC_{k(t)}(r)-S_{k(t)}(r))+  \frac{k'(t)^2 r^2 -4k(t)^2}{4k(t)}  S_{k(t)}(r).
\end{array}
$$
\er

\end{document}